\newtheorem{theorem}{Theorem}
\newtheorem{remark}{Remark}
\newcommand{\LL}{\mathcal{L}}
\newcommand{\B}{\mathbf{B}}
\newcommand{\R}{\mathbb{R}}
\newcommand{\X}{\mathbf{X}}
\newcommand{\x}{\mathbf{x}}
\newcommand{\PP}{\mathcal{P}}
\newcommand{\smallsum}{{\textstyle \sum}}
\newcommand{\ubar}[1]{\underline{#1}}
\renewcommand{\bar}[1]{\overline{#1}}
\begin{document}

\title{Optimal Network Charge  for Peer-to-Peer Energy Trading:  A Grid Perspective}

\author{Yu~Yang,~\IEEEmembership{Member,~IEEE,}
	Yue~Chen, ~\IEEEmembership{Member,~IEEE,} \\
	Guoqiang~Hu,~\IEEEmembership{Senior Member,~IEEE,}
	and~Costas~J.~Spanos,~\IEEEmembership{Fellow,~IEEE}% <-this % stops a space
	\thanks{This  work  was  supported  by  the  Republic  of  Singapore’s  National  Research  Foundation  through  a  grant  to  the  Berkeley  Education  Alliance  for  Research  in  Singapore
		(BEARS)  for  the  Singapore-Berkeley  Building  Efficiency  and  Sustainability  in  the
		Tropics  (SinBerBEST)  Program.  BEARS  has  been  established  by  the  University  of  California,  Berkeley  as  a  center  for  intellectual  excellence  in  research  and  education  in
		Singapore.}% <-this % stops a space
	\thanks{Y. Yang is with SinBerBEST, Berkeley Education 	Alliance for Research in Singapore, Singapore 138602 e-mail: (yangyu13@tsinghua.org.cn).}
	\thanks{Y. Chen is with the Department of Mechanical and Automation Engineering, the Chinese University of Hong Kong, Hong Kong SAR, China. (e-mail: yuechen@mae.cuhk.edu.hk). The work of Y. Chen was supported by CUHK research startup fund.}
	\thanks{G. Hu is with the School 	of Electrical and Electronic Engineering, Nanyang Technological University,
		Singapore, 639798 e-mail: (gqhu@ntu.edu.sg).}
	\thanks{C. J. Spanos is with the Department of Electrical Engineering and 	Computer Sciences, University of California, Berkeley, CA, 94720 USA email: (spanos@berkeley.edu).}% <-this % stops a space
}

% The paper headers
%\markboth{Journal of \LaTeX\ Class Files,~Vol.~14, No.~8, August~2015}%
%{Shell \MakeLowercase{\textit{et al.}}: Bare Demo of IEEEtran.cls for IEEE Journals}

% make the title area
\maketitle

% As a general rule, do not put math, special symbols or citations
% in the abstract or keywords.
\begin{abstract}
Peer-to-peer (P2P) energy trading is a promising market scheme to accommodate the increasing distributed energy resources (DERs). However, how P2P  to be integrated into the existing power systems remains to be investigated. In this paper, we apply network charge as a means for the grid operator to attribute transmission loss and ensure network constraints for empowering P2P transaction. The interaction between the grid operator and the prosumers is modeled as a Stackelberg game, which yields a bi-level optimization problem. We prove that the Stackelberg game admits an \emph{equilibrium} network charge price. Besides, we propose a method to obtain the network charge price by converting the bi-level optimization into a single-level mixed-integer quadratic programming (MIQP), which can handle a reasonable scale of prosumers efficiently. Simulations on the IEEE bus systems show that the proposed optimal network charge is favorable as it can benefit both the grid operator and the prosumers for empowering the P2P market, and achieves \emph{near-optimal} social welfare. Moreover, the results show that the presence of energy storage will make the prosumers more sensitive to the network charge price changes.
\end{abstract}

% Note that keywords are not normally used for peerreview papers.
\begin{IEEEkeywords}
Peer-to-peer (P2P) transaction, network charge,  transmission loss,  Stackelberg game,  bi-level optimization.
\end{IEEEkeywords}

\IEEEpeerreviewmaketitle

\section{Introduction}
\IEEEPARstart{D}{riven}  by  the technology advances and the pressure to advance low-carbon society,  power systems are experiencing the steady increase of  
distributed energy resources (DERs), such as home batteries, electric vehicles (EVs), roof-top solar panels, and on-site wind turbines, etc. \cite{DERs, yang2017distributed, yang2018decentralized, yang2020selling}.
%The presence of DERs is changing the landscape of power systems from centralized to decentralized  featured by  widespread  renewable generators and flexible loads. 
%The roles of customers are also switching from passive consumers to proactive prosumers capable of producing, storing, and consuming electricity. 
As a result, the  traditional centralized energy management  is being challenged as the DERs on the  customer side are beyond the control of the power grid operator. 
In this context, peer-to-peer (P2P) energy  trading has emerged as a promising mechanism to account for the DERs  \cite{morstyn2018using}.  
P2P aims for a consumer-centric  electricity market that allows the consumers with DERs (i.e., prosumer) to trade   energy surplus or deficiency mutually  \cite{zhang2018peer, chen2021energy, chen2022towards}. 
The vision of P2P is to empower the prosumers to achieve the balance of  supply and demand  autonomously and economically by leveraging their complementary and flexible  generation and consumption. 
P2P energy  trading  is beneficial to both the power grid operator and the prosumers.   
Specifically,  P2P  can  bring  monetary value  to the prosumers  by allowing them to sell surplus local  renewable generation to their neighbors or vice verse \cite{tushar2018transforming, yang2021optimal}.  
P2P also favors the power grid operation  in term of   reducing the cost of generation and transmission expansion to account for the yearly increasing demand as well as reducing transmission loss by driving local self-sufficiency \cite{tushar2019grid}.  
 
Due to the  widespread prospect,  P2P  energy trading  mechanism  has raised extensive interest from the research community. A large body of works has made efforts to address the  matching of supply and demand bids  for  prosumers with customized preferences or interests.  
This is usually termed market clearing mechanisms. 
 The mechanisms  in discussion are diverse and plentiful, which can be broadly categorized  by  optimization-based approaches  \cite{baroche2019prosumer, cui2019peer, yang2021optimal}, auction schemes \cite{tushar2019grid, teixeira2021single}, and  bilateral  contract negotiations \cite{morstyn2018bilateral, kim2019p2p}.  Quite  a few of comprehensive and systematic reviews have documented those  market clearing mechanisms, such as \cite{sousa2019peer, tushar2018transforming, khorasany2018market}.  
 %In general, optimization-based methods are popular as it set the discipline to approach the desirable market outcomes but at the cost of solving comprehensive optimization problems.  
 On top of that, a line of works has discussed the trust, secure, and transparent  implementation of P2P market scheme  by combing with the well-known blockchain technology, such as    \cite{hamouda2020novel,esmat2021novel}. 

The above studies are mainly focused on the business models of energy trading in  virtual layer and in the shoes of prosumers. 
Whereas the energy exchanges in a P2P market require the  delivery  in   physical layer  taken  by the power grid operator who 
is responsible for  securing  the transmission capacity constraints  and compensating  the transmission loss. 
 In this regard,  the  effective interaction between the prosumers making energy transaction in virtual layer and the power grid operator delivering the trades in  physical layer is essential for the successful deployment of P2P market scheme.  
 The interaction requires to  secure the economic benefit of prosumers  in the P2P  market as well as ensure the operation feasibility of  power grid operator. This has been identified as one key  issue  that remains  to be addressed \cite{tushar2020peer}.

Network charge which allows the grid operator to impose some grid-related cost on  the  prosumers for energy exchanges, has been advocated as a promising tool to bridge this interaction. 
 Network charge is  reasonable and natural considering many aspects.  First of all, network charge is necessary for the power grid to attribute the network investment cost and the transmission loss \cite{baroche2019exogenous}. 
In traditional power systems where customers trade energy with the power grid, such  cost  has  been internalized  in the electricity price, 
it is therefore  natural  to pass the similar cost with P2P to the prosumers via some price mechanisms.  
Besides, network charge can work as a means  to shape the P2P  energy trading market  to ensure the feasible  delivery of trades in physical layer taken by the grid operator  \cite{guerrero2018decentralized}. 
Generally, network charge is charged by the trades,  therefore it can be used to guide the behaviors of the prosumers in the P2P market. 
As a result, several recent works have relied on network charge to account for  the  grid-related cost or shape the P2P markets, such as  \cite{paudel2020peer,  baroche2019exogenous, kim2019p2p}.  Specifically, \cite{paudel2020peer} has involved network charge   in developing  a decentralized P2P market clearing mechanism. 
The work  \cite{baroche2019exogenous} comparatively simulated  three network charge models (i.e., unique  model,  electrical distance based model, and zonal model) on shaping the P2P market.
The work  \cite{kim2019p2p} has relied on a  network charge  model to achieve \emph{ex-post} transmission loss allocations across the prosumers.  The above works have demonstrated that network charge can effectively shape the P2P transaction market.  In addition, network charge can work as a tool to attribute grid-related cost and  transmission loss which are  actually taken by the grid operator.   However, the existing works have mainly focused on studying how the network charge will affect the behaviors of prosumers in a P2P market instead of studying how the  network charge price to be designed which couples   the grid operator and  the prosumers acting as  independent stakeholders and playing different roles.

This paper fills the gap by jointly considering 
%In this context, this paper apply network charge as a means to bridge the interaction  between 
 the power grid operator who provides transmission service and the prosumers  who make energy transaction in a P2P market and propose an optimal network charge mechanism.  Particularly,  considering that  the power grid operator and the prosumers are independent stakeholders and have different objectives, we model the interaction between the power grid operator and the prosumers as a Stackelberg game.  First,  the grid operator decides on the optimal network charge price  to trade off the network charge revenue and the transmission loss considering  the network constraints, and then the prosumers optimize their energy management (i.e., energy consuming, storing and trading) for maximum economic benefits. 
Our main contributions are:
 \begin{itemize}
\item[(C1)] We propose a Stackelbeg game model to account for the interaction  between the power grid operator imposing network charge price and the prosumers  making energy transaction  in a P2P market. The distributed renewable generators and energy storage (ES) devices on the prosumer side are considered.   We prove that the Stackelberg game admits an \emph{equilibrium}  network charge price.

\item[(C1)] To deal with the computational challenges of obtaining  the  network charge price,  we convert the bi-level optimization problem yield by the Stackelberg game to a single-level mixed-integer quadratic programming (MIQP) by exploring the problem structures.  The method can  handle a reasonable scale of prosumers  efficiently. 
	
\item[(C2)] By simulating  the IEEE bus systems, we demonstrate that the network charge mechanism  is favorable as it  can benefit both the grid operator and the prosumers for empowering the P2P market. Moreover, it  can provide \emph{near-optimal} social welfare. 
In addition, we find that  the presence of ES will  make the prosumers more sensitive to the network charge price changes. 
\end{itemize}

The rest of this paper is as:  in Section II, we present the Stackelberg  game formulation; in Section III, we propose a single-level conversion method; in Section VI, we examine the proposed network charge mechanism via case studies; in Section V, we conclude this paper and discuss the future work.

\section{Problem Formulation}

%\subsection{A Grid-leading P2P Energy Trading Market}

\begin{figure}[h]
	\centerline{\includegraphics[width=3.0 in]{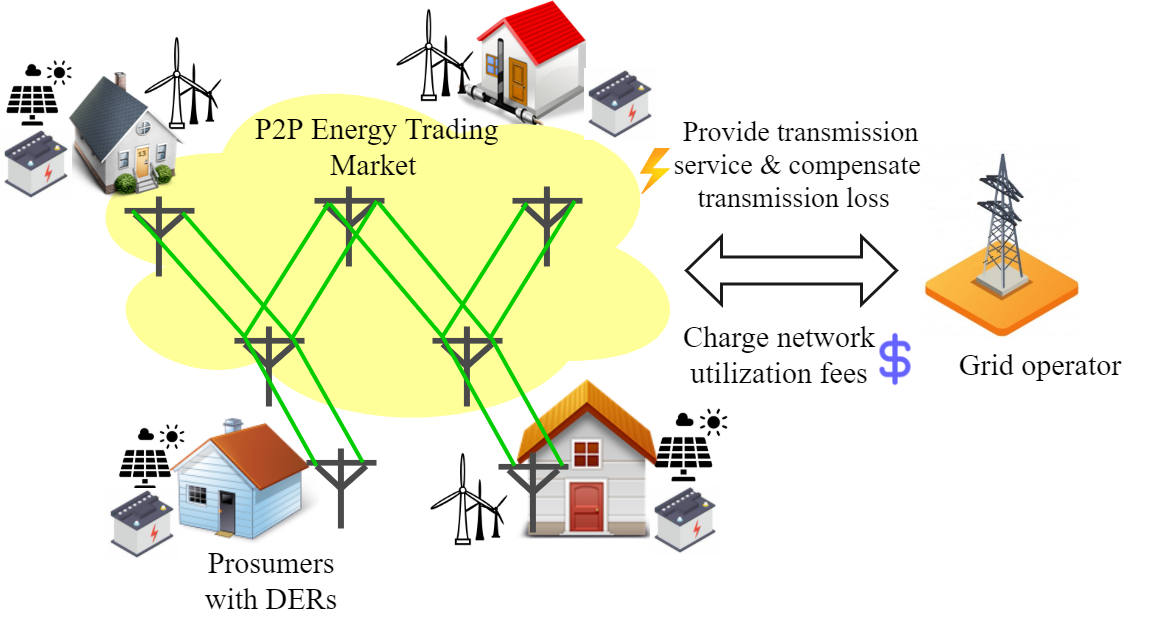}}
	\caption{Interaction between the grid operator and a P2P energy trading market.}
	\label{fig:P2P_market}
\end{figure}

Fig. \ref{fig:P2P_market} shows the interaction between the grid operator and a P2P energy trading market to be discussed in this paper.   
By providing transmission service and compensating the transmission loss for empowering P2P trading, the grid operator plays the leading role by deciding the network charge price.
 %which will affect the behaviors of prosumers in the P2P energy trading market.  Specifically, 
 In response, the prosumers with DERs (e.g., solar panels, wind turbines,  ES, etc.)  in the P2P market will optimize their optimal energy management (i.e., energy consuming, storing and trading) for maximum economic benefits.   In this paper, we assume  the grid operator and  prosumers are independent stakeholder and  are both profit-oriented,  expecting to maximizing their own profit via the interaction. 
For the grid operator, the profit is evaluated by the network charge revenue minus the cost of transmission loss. 
For the prosumers, the profit is quantified by the utility (i.e., satisfaction) of consuming certain amount of energy  and  the  energy  cost such as  the network charge payment.  The objective of this paper is to determine the  optimal network charge price  that maximizes  the grid profit  while securing the prosumers' profit in the P2P market.

\vspace{-2mm}
\subsection{Network Charge Model}

How to charge  P2P energy trading for network utilization fee is still an open issue.  One way in extensive discussion  is based on the electrical distance and the volume of transaction.  Specifically,  if prosumer  $i$ buys $p_{ij}$ [\si{\kilo\watt}] units of power from prosumer $j$  over an electrical distance of $d_{ij}$ [\si{\kilo\meter}], the network charge is calculated as 
\begin{equation} \label{eq:network_charge_model}
\begin{split}
& T(p_{ij}) = \gamma d_{ij} p_{ij} \\
\end{split}
\end{equation}
where $\gamma$[s\$/(\si{\kilo\watt \cdot \kilo\meter})] is the network charge price determined by the grid operator, which represents the  network utilization fee for per unit of energy transaction over per unit of electrical distance.

The electrical distance is determined by the electrical network topology and the measures used.   For a given electrical network,   there are several popular  ways  to  measure  the electrical distances as discussed in  \cite{cuffe2015visualizing}.  One of them is the  \emph{Power Transfer Distance Factor} (PTDF) 
which  has been mostly used  for network charge calculations (see  \cite{paudel2020peer,  baroche2019exogenous, kim2019p2p} for examples). 
We therefore use  the PTDF for measuring the electrical distances.  
 For an electrical network characterized by transmission lines $\LL$,   the electrical distance between any trading peers $i, j$ based on PTDF  is  defined as  
\begin{align} \label{eq:PTDF}
d_{ij} = \sum_{\ell \in \LL} \vert {\rm PTDF}_{\ell, ij}\vert
\end{align}
where ${\rm PTDF}_{\ell, ij}$ represents the PTDF  of prosumer $i, j$ related to transmission line $\ell \in \LL$, which characterizes the  estimated  power flow change of line $\ell$ caused by  per unit of  energy transaction between prosumer $i$ and prosumer $j$ according to the DC power flow sensitivity analysis. 

PTDF is directly derived  from  the DC power flow equations and the details can be found in  \cite{christie2000transmission}.  In the following, we only summarize the main calculation procedures.   For an electrical network characterized by $N$ buses and $L$ transmission lines, we first have the nodal acceptance  matrix: 
\begin{align*}
	B_{ij}  = \begin{cases}
		\sum_{k = 1}^N \frac{1}{x_{ik}}, & {\rm if}~j = i. \\
		-\frac{1}{x_{ij}}, &{\rm if}~ j \neq i. \\
	\end{cases}
\end{align*}
where $x_{ij}$ represents the reactance of the  line connecting bus $i$ and bus $j$. 

We denote $\B_r$ as the sub-matrix of $\B$ which eliminates the row and column related to the reference bus $r$. Without any loss of  generality, we specify  bus $N$ as  the reference bus, we therefore have  $\B_r = \B[1: N-1, 1: N-1]$ and the reverse  $\X _r= \B_r^{-1}$.  By setting \emph{zero} row and column for the reference bus $r = N$, we have the augmented matrix:
\begin{align*}
	\X = \begin{pmatrix}
		\X_r & \mathbf{0}\\
		0  & 0 
	\end{pmatrix}
\end{align*}

By using matrix $\X$, we can calculate the PTDF by 
\begin{align}\label{eq:PTDF1}
	{\rm PTDF}_{\ell, ij} = \frac{X_{mi} - X_{mj} - X_{ni} + X_{nj}}{x_{\ell}}
\end{align}
where $X_{mi}, X_{mj}, X_{ni}, X_{nj}$ represent the elements of matrix $\X$ at row $m, n$ and column $i, j$, $\ell$ is the transmission line connecting bus $m$ and bus $n$.  

\textbf{An illustration example}: we use the 5-bus system in Fig. \ref{fig:IEEE_5_bus} to  illustrate the interpretation of electrical distances based on PTDF. 
Based on  \eqref{eq:PTDF}-\eqref{eq:PTDF1} and the reactance parameter $\x$, we can obtain the electrical distance $\mathbf{d}$ shown in Fig. \ref{fig:IEEE_5_bus} (b).  Particularly,  we have the electrical distance between bus $1$ and bus $3$:  $d_{13} = 0.2958 + 0.4930 + 0.2113 + 0.2958 + 0.2113 = 1.5072$ which are the PTDF of bus $1$ and bus $3$ related to the 5 transmission lines.  As shown in Fig. \ref{fig:IEEE_5_bus} (a), the PTDF  for bus $1, 3$ can be 
 interpreted as the total power flow changes  of all transmission lines caused by  per unit of energy transaction between the bus $1$ and $3$ according to the DC power flow analysis.
 \begin{figure}[h]
 	\centerline{\includegraphics[width=3.0 in]{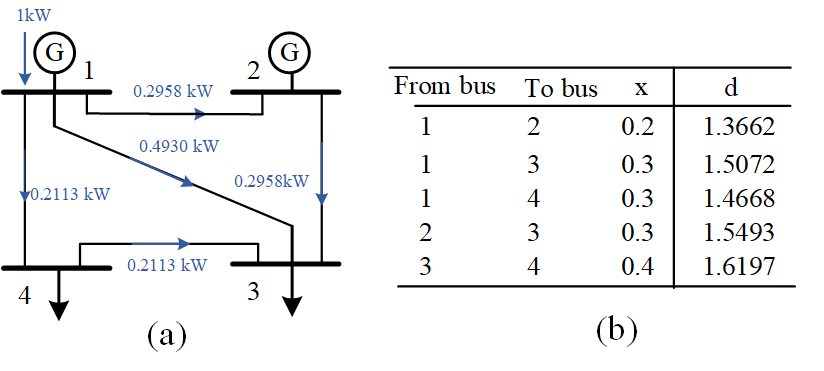}}
 	\vspace{-3mm}
 	\caption{(a) Power flow changes of  all transmission lines  if  bus $1$ transfers  1 \si{\kilo\watt} power to bus $3$ based on DC power flow analysis.  (b) The  electrical distances between the buses based on PTDF for the 5-bus system.  }
 	\label{fig:IEEE_5_bus}
 \end{figure}

\subsection{Stackelberg Game Formulation}

As discussed,  the interaction between the grid operator and the prosumers shows  a hierarchical structure.  
This corresponds well to a  Stackelberg game where   the power grid behaves as  the \emph{leader} and the prosumers are  \emph{followers}.   Before the formulation, we first define the main notations in TABLE \ref{tab:notations}. 

\vspace{-3mm}
\begin{table}[h]
	\setlength\tabcolsep{3pt}
	\centering
	\caption{Main notations}
	\vspace{-2mm}
	\label{tab:notations}
	\begin{tabular}{ll}     % N = 10 >
		\toprule
		Notation  & ~~~~~~~~~~~~~~~~~~~Definition \\ 
		\hline
		$i, j$         & Prosumer/bus index.   \\
		$t $           &  Time index.     \\
		$\gamma$  & Network charge price.  \\
		$p_{ij, t}^{+}/p_{ij, t}^{-}$  &   \makecell[l]{Traded (buy or sell) energy between prosumer $i, j$. }\\ 
		$\theta_{i, t}$ & Phase angle at bus $i$.   \\
		$\PP_{i, t}$   & Consumed or generated power of prosumer $i$.\\
		$P_{i, t}$  & Injected power at bus $i$.\\
		$p_{i, t}^{\rm ch}/p_{it}^{\rm dis}$ &  Charged/discharged  power of prosumer $i$'s ES.  \\
		$ e_{i, t}$   & Stored energy of prosumer $i$'s ES. \\
		$U_{i, t}(\PP_{i, t})$  & Utility function of prosumer $i$. \\
		$T(p_{ij, t}^{+})$   & Network charge for trading $p_{ij, t}^{+}$ units of  energy. \\ 
		$F_{ij}^{\max}$   & Transmission network capacity for line $(i, j) \in \LL$. \\
		$p_{i,t}^{\rm r}$  & Renewable generation of prosumer $i$. \\
		$C_{ij}^{\max}$  & Max. trading power between prosumer $i, j$. \\
		$e_i^{\min}/e_i^{\max}$ & Min./max. stored energy  of  prosumer $i$'s ES.\\
		$\PP_{i, t}^{\min}/\PP_{i, t}^{\max}$   & Min./max. consumption/generation of prosumer $i$.  \\
		\bottomrule 
	\end{tabular}
\end{table}

\subsubsection{Leader} 
In the upper level, the power grid optimizes the network charge price $\gamma$ to trade off the network charge   revenue and the transmission loss  considering the transmission network constraints.   Network  charge revenue is calculated   by \eqref{eq:network_charge_model} and  the power transmission loss is  consolidated by the DC power flow of  the transmission  network  \cite{ding2018distributed}. We have the problem for the power grid:
\begin{subequations} 
	\begin{alignat}{4}
\label{eq:upper-level}\min_{\x_U} ~&{\rm Profit}=  \sum_{t }\sum_i  \sum_j \big( T(p_{ij, t}^{+})  + T(p_{ij, t}^{-}) \big)/2 \tag{${\rm P}_U$} \\
& \quad \quad~~- \rho \sum_t \sum_{(i,j) \in \LL} b_{ij} (\theta_{i, t}- \theta_{j, t}) ^2 \notag\\
{\rm  s.t.} &~\gamma_{\min} \leq \gamma \leq \gamma_{\max}.  \\
\label{eq:7b}& \B  \bm{\theta}_t = \mathbf{P}_t, \forall t. \\
\label{eq:7c}& \theta_{r, t}= 0, \forall t.   \\
\label{eq:7d}& \vert (\theta_{i, t} - \theta_{j, t}) b_{ij}\vert \leq F_{ij}^{\max}, \forall (i, j) \in \LL, t. \\
\label{eq:7e} &P_{i, t} = \smallsum_{j} p_{ij, t}^{-} - \smallsum_{j}  p_{ij}^{+}, \forall i, t. \\
\label{eq:7f} & \mathbf{P}^{\min} \leq  \mathbf{P}_t \leq \mathbf{P}^{\max}, \forall t. 
	\end{alignat}
\end{subequations}
where the decision variables for the power grid operator are  $\x_{\rm U} = [\gamma, \theta_{i, t}], \forall i, t $.  We use $\mathbf{P}_t = [P_{i, t}], \forall i$ to denote the  power injections at the buses and  $b_{ij}$ denotes the admittance of the line connecting bus $i$ and bus $j$.   We have $ \gamma_{\min}, \gamma_{\max} > 0$ characterize  the range of  network charge price.  We use the term $\rho \smallsum_{(i,j) \in \LL} b_{ij} (\theta_{i, t} - \theta_{j, t}) ^2 = \rho \smallsum_{(i, j) \in \LL}  P_{ij, t}^2/b_{ij} $ related to the power flows to quantify the consolidated transmission  loss over the transmission networks $\LL$   and  $\rho$ is the transmission  loss cost coefficient \cite{ding2018distributed}.
Constraints \eqref{eq:7b} represent the DC power flow equations.  Constraints \eqref{eq:7c} specify  the  phase angle of  reference bus $r$. 
Constraints \eqref{eq:7d} model the  transmission line capacity limits. In this paper, we  use  the DC power flow model  to account for the transmission constraints and transmission loss. Whereas the proposed framework  can be  readily extended to  AC power flow model by replacing \eqref{eq:7b}-\eqref{eq:7d} with the DistFlow \cite{farivar2013branch} or the modified DistFlow \cite{rigo2022iterative} model. The  nonconvex AC power flow model can be further convexified  into a second-order cone program (SOCP) or a semi-definite program (SDP). Then the proposed method of  this paper can still be used to solve the  problem though with increased problem complexity.

%where $c$ is the loss coefficient,  $ \gamma_{\min}, \gamma_{\max} > 0$ characterizing   the range of  network charge price, and we have the transmission loss measured by 
%\begin{align*}
%P_{\rm loss} = \sum_{(i, j) \in \mathcal{E}} r_{ij} \frac{P_{ij}^2 + Q_{ij}^2}{v_i} = \sum_{(i, j) \in \mathcal{E}} r_{ij} \ell_{ij}
%\end{align*}
%\blue{Solve the OPF in the upper level: power loss + network constraints}  \\
%\red{Add the OPF formulation to the upper level} 
%where we have  $ \gamma_{\min}, \gamma_{\max} > 0$ characterizing   the range of  %network charge price. 

%In general, non-renewable generation shows increasing marginal production cost (e.g., expensive reserve). We use the following quadratic function $f(\cdot)$ to quantify the cost of  $P^{\text g}$ units of generation:  
%\begin{equation*}
%\begin{split}
%f(P^{\text g}) = a P^{\text g} + b\left(P^{\text g} \right)^2
%\end{split}
%\end{equation*}
%where  $a$ (\si{s\$\per{\kilo\watt}}) and $b$  (\si{s\$\per{\kilo\watt}^2})  are positive coefficients.  
%Since we  have $f^{'}(P^{\text g}) = a  + 2b P^{\text g} >0$,  the quadratic function can somehow characterize the increasing marginal generation cost of non-renewable generation run by the existing power grid.

\subsubsection{Followers}   In the lower level, the prosumers  in the P2P market will  respond to the network charge price $\gamma$ for maximal economic benefit. 
We use  $U_{i, t}(\PP_{i, t})$ to represent the utility functions of prosumer $i$. Due to the presence of DERs, a prosumer could be a consumer or a producer. 
In this  regard, $U_{i, t}(\PP_{i, t})$  could represent the satisfaction of a customer for consuming $\PP_{i, t}$ units of power or the   cost  of a producer for generating $\PP_{i, t}$ units of energy.  
We also involve the distributed renewable generators and  ES devices on the prosumer side in the formulation. 
In this paper, we assume the prosumers will cooperate with each other in the P2P market  and formulate the problem as a centralized optimization problem as many existing works have proved that the cooperation can make all  prosumer better off with some suitable \emph{ex-post} profit allocation mechanisms  (see \cite{yang2021optimal, jo2020demand, rey2018strengthening} for examples).
Since the network charge is measured by the traded power regardless of the direction, we distinguish the purchased power and sold power between  prosumer $i$ and prosumer $j$ by $p_{ij, t}^{+}$ and $p_{ij, t}^{-}$. 
The problem to optimize the total prosumer profit  considering  network charge payment  is presented  below.  
  \begin{subequations} 
	\begin{alignat}{4}
\label{eq:lower-level}\max_{\x_L}&~~{\rm Profit} = \sum_t \sum_{i} U_{i, t}(\mathcal{P}_{i, t})  \tag{${\rm P}_L$} \\
& \quad \quad \quad - \sum_t \sum_{i} \sum_{j} \big( T(p^{+}_{ij, t}) + T(p^{-}_{ij, t})\big)/2 \notag\\
\label{eq:8a}\text{s.t.}~~& p_{ij, t}^{+} = p^{-}_{ji, t},  ~~~\forall i, j, t.\\
\label{eq:8b} & 0 \leq p_{ij, t}^{+} \leq C_{ij}^{\max}, ~~\forall i, j, t. \\
\label{eq:8c} & 0 \leq p_{ij, t}^{-} \leq C_{ij}^{\max}, ~~\forall i, j, t. \\
\label{eq:8d}& \PP_{i, t}\leq  p_{i, t}^{\rm r}  \!+ \!p_{i, t}^{\rm dis} \!-\! p_{i, t}^{\rm ch} \!+\! \smallsum_{j} p_{ij, t}^{+} \!-\! \smallsum_j p_{ij, t}^{-}, \forall i, t.\\
\label{eq:8e}& \PP_{i, t}^{\min} \leq \PP_{i, t}\leq \PP_{i, t}^{\max}, ~\forall i , t. \\
\label{eq:8f}& e_{i, t + 1} = e_{i, t} + p_{i, t}^{\rm ch} \eta - p_{i, t}^{\rm dis}/\eta,  ~\forall i, t. \\
\label{eq:8g}& 0 \leq p_{i, t}^{\rm ch}  \leq P_i^{\rm ch, \max}, ~~\forall i, t. \\
\label{eq:8h}& 0 \leq p_{i, t}^{\rm dis} \leq  P_i^{\rm dis, \max},  ~~\forall i, t. \\
\label{eq:8i}& e_i^{\min} \leq e_{i, t} \leq e_i^{\max},  \forall i, t. 
	\end{alignat}
\end{subequations}
where  the decision variables for the prosumers are  $\x_L = [p_{ij, t}^{+}, p_{ij, t}^{-}, \PP_{i, t}, p_{i, t}^{\rm ch}, p_{i, t}^{\rm dis}, e_{i, t}], \forall i, t$. 
Constraints \eqref{eq:8a} model the consistence of energy transaction  between the sellers and the buyers. 
Since the transmission loss is compensated by the power grid operator, we have the amount of energy that  prosumer  $i$ buys from prosumer $i$ equals that  prosumer $j$ sells to prosumer $i$. 
Constraints \eqref{eq:8b}-\eqref{eq:8c} impose  the transaction limits between the trading peers. 
Constraints \eqref{eq:8d}  ensure  the load balance of each prosumer. Particularly, we use inequality to capture the case where some renewable generation is  curtailed. 
Constraints \eqref{eq:8e} characterize the  demand or supply flexibility of  the prosumers. Constraints \eqref{eq:8f} tracks the stored energy of prosumers' ES  with $\eta \in (0, 1)$ denoting the charging/discharging efficiency. 
Constraints \eqref{eq:8g}-\eqref{eq:8h} impose the charging, discharging and stored  energy capacity limits. 
In this paper, we focus on the energy trading among the prosumers in the P2P market.  For the case where the prosumers also trade electricity with the power grid, the proposed model  can be readily  extended by adding the cost or revenue related to the energy trading with the grid to the prosumers' objective in the lower-level problem \eqref{eq:lower-level}.
%	 by taking the time-of-use (ToU) electricity price and the feed-in tariff (FIT). Since the ToU and FIT have involved the network charge price, this does not affect the network charge price design for the P2P market and thus not explicitly considered in this paper. 

\subsubsection{Piece-wise linear utility function} 
\begin{figure} [h]
	\centering
  \includegraphics[width = 3.0 in]{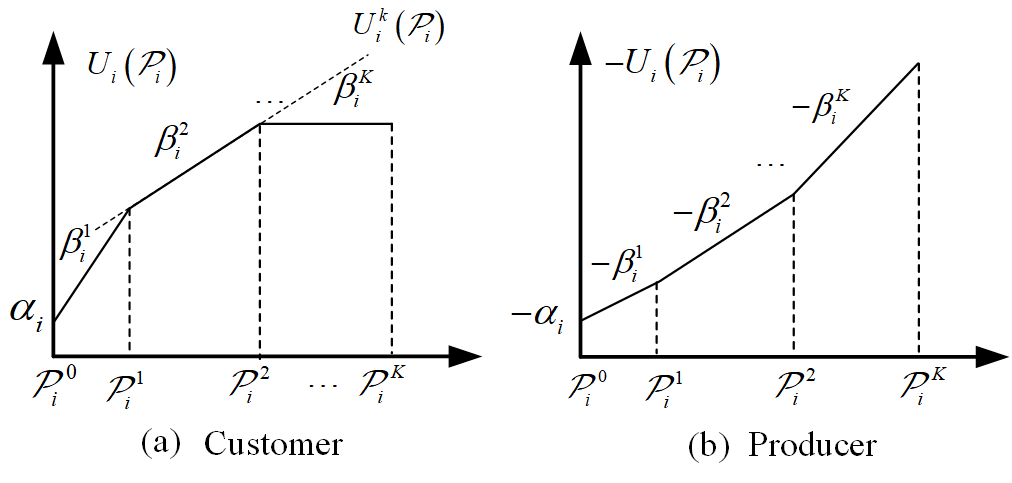}
	\caption{(a) Piece-wise linear (PWL) utility function for a consumer $\nabla U_{i}(\PP_{i}) \geq 0$. (b) Piece-wise  linear utility  (PWL) function for a producer $\nabla U_{i}(\PP_{i}) \leq 0$ (time $i$ is omitted). }
	\label{fig:PWL-utility}
	\vspace{-3mm}
\end{figure}

This paper employs  concave piece-wise linear (PWL) utility  functions to capture the prosumers' demand or supply flexibility as shown in Fig. \ref{fig:PWL-utility}.  The motivation behind is that PWL functions are universal and can  approximate  all types of  utility functions, such as quadratic and logarithmic \cite{XuGuoGao}. We may obtain the PWL utility functions by linearizing non-linear utility functions   or directly learn it from data \cite{wu2011tighter}.  
Due to the presence of DERs, the prosumer could be a consumer in energy deficiency or a producer with energy surplus. This could be universally formulated by the PWL utility function but with the opposite sign of the slopes. We use Fig. \ref{fig:PWL-utility} (a) and (b) to show the two scenarios (time $i$ is omitted):  if the slope of the PWL utility function  is non-negative $U_{i} (\PP_i) \geq 0$,  the prosumer plays the role of customer and the prosumer will play the role of producer if $U_i(\PP_i) \leq 0$.  As shown in Fig \ref{fig:PWL-utility}, a general  PWL utility function  composed of $K$ segments is characterized by  the transition points and slopes: $\PP^k_i$ and   $\beta^k_i, k = 1, 2, \cdots, K$.   The function associated with the $k$-th segment can be described as 

{\small 
\begin{align}\label{eq:utility}
%\begin{align*}
U^k_i(\PP_i) \!= &\alpha_{i} \!+\!  \sum_{\ell = 1} ^{k-1} \beta^{\ell}_i\left( \PP^{\ell}_i \!-\! \PP^{\ell-1}_i \right)  \!\!+\!\! \beta^k_i \left( \PP_i \!-\! \PP^{k-1}_i\right), \forall i, k. 
%& \quad \quad \quad {\rm if}~ \PP_i \in [\PP^{k-1}_i, \PP^k_i), ~\forall i, t. 
%\end{align*}
\end{align} }
where $\alpha_{i}$  is the  constant component of prosumer $i$'s  utility function,  which could represent the  satisfaction level of a prosumer for consuming zero unit of energy or the start-up generation cost  for a producer.  It is easy to note that we  have $U_i(\PP_i) = U_i^k(\PP_i)$ if $\PP_i \in [\PP^{k-1}_i, \PP^k_i)$. 

%\subsubsection{Bi-level framework} Based on the above formulation,  we have  the bi-level  framework for optimizing the network charge price as shown in Fig. \ref{fig:grid_prosumer_interaction}.  Since the transmission network topology is invisible to the prosumers, we assume the grid operator will truthfully report the electrical distances to the prosumers. After that they will negotiate the network charge price $\gamma$ to achieve the equilibrium. 
%\begin{figure}[h]
%	\centerline{\includegraphics[width=2.8 in]{grid_prosumer_interaction.png}}
%		\vspace{-3mm}
%	\caption{Bi-level framework for optimizing network charge price for P2P energy trading.}
%	\label{fig:grid_prosumer_interaction}
%\end{figure}

For the proposed Stackelberg game, we have the following results regarding the existence of  \emph{equilibrium}. 
\begin{theorem} \label{them:theorem1}
The Stackelberg game \eqref{eq:upper-level}-\eqref{eq:lower-level} admits an  equilibrium. 
\end{theorem}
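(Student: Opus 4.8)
The plan is to prove existence along the standard route for an optimistic bilevel (Stackelberg) program: show the followers' problem is solvable for every admissible price, use a maximum-theorem argument to get a well-behaved best-response correspondence, and then apply Weierstrass to the leader over the resulting compact graph. First I would fix any $\gamma \in [\gamma_{\min},\gamma_{\max}]$ and examine $(\mathrm{P}_L)$. Since each $U_{i,t}$ is concave PWL, the aggregate utility $\sum_{i,t} U_{i,t}$ is concave, while each charge $T(p^{\pm}_{ij,t}) = \gamma d_{ij} p^{\pm}_{ij,t}$ is linear, so the objective is concave; constraints \eqref{eq:8a}--\eqref{eq:8i} are all linear, so the feasible set $\Omega_L$ is a polyhedron. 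It is bounded because every variable is box-constrained (trades by $C_{ij}^{\max}$, consumption by $\PP_{i,t}^{\min},\PP_{i,t}^{\max}$, charge/discharge by $P_i^{\rm ch,\max},P_i^{\rm dis,\max}$, storage by $e_i^{\min},e_i^{\max}$). Assuming $\Omega_L \neq \emptyset$, it is compact, the continuous objective attains its maximum, and the optimal-response set $S(\gamma)$ is nonempty, compact, and convex.

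The key structural point is that $\gamma$ enters $(\mathrm{P}_L)$ only through the objective, so the constraint correspondence $\gamma \mapsto \Omega_L$ is constant, hence trivially continuous. Because the objective is jointly continuous in $(\gamma,\x_L)$, Berge's Maximum Theorem then gives that the optimal value $V_L(\gamma)$ is continuous and that $S(\gamma)$ is upper semicontinuous with compact values; equivalently, $S$ has a closed graph. As a side benefit, $V_L$ is a pointwise maximum of functions affine in $\gamma$, hence convex and piecewise linear, which later aids the algorithmic MIQP reformulation.

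Next I would eliminate the leader's auxiliary variables. Given the followers' trades, the injections $\mathbf{P}_t$ are fixed by \eqref{eq:7e}, after which \eqref{eq:7b}--\eqref{eq:7c} determine the angles uniquely as $\bm{\theta}_{r,t} = \X_r \mathbf{P}_{r,t}$ with $\theta_{N,t}=0$. Thus $\bm{\theta}_t$, the loss term, the revenue, and the left-hand sides of \eqref{eq:7d} and \eqref{eq:7f} are all continuous functions of $(\gamma,\x_L)$. I would then form the optimistic feasible set $\mathcal{F} = \{(\gamma,\x_L): \gamma \in [\gamma_{\min},\gamma_{\max}],\ \x_L \in S(\gamma),\ \text{\eqref{eq:7d} and \eqref{eq:7f} hold}\}$. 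Since the graph of $S$ is closed, the box $[\gamma_{\min},\gamma_{\max}] \times \Omega_L$ is compact, and the network constraints define closed sets, $\mathcal{F}$ is a closed subset of a compact set, hence compact. With the leader's objective continuous on $\mathcal{F}$, Weierstrass yields an optimizer $(\gamma^{*},\x_L^{*})$, which is by construction a Stackelberg equilibrium: $\x_L^{*}$ best-responds to $\gamma^{*}$, and $\gamma^{*}$ is optimal for the leader against this response.

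The main obstacle is not the optimization machinery but the nonemptiness of $\mathcal{F}$, i.e. the existence of a price whose induced best-response trades respect both the line-capacity limits \eqref{eq:7d} and the injection bounds \eqref{eq:7f}. While $S(\gamma)$ is always nonempty, the followers ignore the grid constraints in $(\mathrm{P}_L)$, so upper-level feasibility is a genuine cross-level coupling rather than an automatic consequence. I would discharge it either by a standing assumption that the data admit at least one network-feasible price, or constructively---for instance by arguing that a sufficiently large $\gamma_{\max}$ suppresses trading enough that the induced $\mathbf{P}_t$ lie within $[\mathbf{P}^{\min},\mathbf{P}^{\max}]$ and the resulting line flows stay within $F_{ij}^{\max}$---after which the compactness argument above closes the proof.
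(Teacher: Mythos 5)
Your proposal is correct and follows the same skeleton as the paper's own proof: solve the follower problem \eqref{eq:lower-level} parametrically in $\gamma$, eliminate the angles through \eqref{eq:7b}--\eqref{eq:7c}, and reduce the game to an optimization over the compact interval $[\gamma_{\min},\gamma_{\max}]$. The difference is that you make rigorous the two steps the paper leaves implicit. First, the paper writes the follower response as a single-valued map $\x_L(\gamma)$ and simply asserts that the reduced problem ``will yield at least one optimal solution''; this needs exactly what you supply via Berge's Maximum Theorem --- a closed-graph, upper semicontinuous best-response correspondence $S(\gamma)$ (which is genuinely set-valued here, since the lower-level problem is an LP and can have multiple optima) together with your explicit optimistic formulation, without which attainment of the leader's supremum over the reduced problem is not justified. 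Second, and more importantly, the obstacle you isolate at the end is real and is passed over in silence by the paper: the upper-level constraints \eqref{eq:7d} and \eqref{eq:7f} are imposed on injections and flows that are determined by the followers' trades via \eqref{eq:7e}, yet the followers optimize ignoring these constraints, so nonemptiness of the leader's feasible set is a cross-level coupling that must be assumed or proved. As stated, Theorem~\ref{them:theorem1} implicitly requires that some admissible $\gamma$ induces network-feasible trades; your proposed repairs (a standing feasibility assumption, or the observation that a sufficiently large $\gamma_{\max}$ suppresses all trading, giving zero injections and zero flows, which are feasible whenever $\mathbf{P}^{\min}\le \mathbf{0}\le \mathbf{P}^{\max}$) are what a complete proof needs. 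In short, your route is the careful version of the paper's sketch, and the gap you flag is a gap in the paper's argument, not in yours.
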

\begin{proof}
For the lower-level problem \eqref{eq:lower-level},  we note  that the problem is compact and convex with any given network charge price $\gamma$.  This implies that  the optimal solution for the lower-level problem \eqref{eq:lower-level} always exists and can be expressed  by $\x_L(\gamma)$.  
By substituting the closed-form solution $\x_L(\gamma)$ (if explicitly available) into the upper-level problem \eqref{eq:upper-level} and by expressing the phase angle decision variables $\bm{\theta}$ with  the power flows  determined by  the lower-level problem solution $\x_L(\gamma)$,  we can conclude a single-level optimization problem for the Stackelberg game with the only bounded decision variables $\gamma \in [\gamma^{\min}, \gamma^{\max}]$,  which will yield at least one optimal solution. This implies that the proposed Stackelberg game adopts  at least one Stackelberg \emph{equilibrium}. 
\end{proof}

\begin{remark}
	The existence of the Stackelberg \emph{equilibrium} implies that the proposed optimal network charge model can yield  an optimal network charge price  that maximizes the profit of the grid operator while considering the cost-aware behaviors of the prosumers in the P2P market. 
\end{remark}

\section{Methodology}
Note that the optimal network charge associates with the \emph{equilibrium} of the Stackelberg game  \eqref{eq:upper-level}-\eqref{eq:lower-level} which yields a bi-level optimization. 
Bi-level optimization  is generally   NP-hard  and computationally intensive \cite{dempe2020bilevel}.  
This section proposes a  method to convert   the bi-level problem to  a  single-level  problem that can  accommodate  a  reasonable scale of  prosumers by exploring the problems  structures. 
To achieve this goal, we first restate the lower-level problem \eqref{eq:lower-level} as  

{\small 
\begin{subequations} 
	\begin{alignat}{4}
\label{eq:reformulated_lower_level}\max_{\x_L} &~{\rm Profit}=  \sum_t \sum_{i } u_{i, t} - \sum_t \sum_{i} \sum_{j} T(p_{ij, t}^{+})\tag{${\rm P}^{'}_L$}  \\
\label{eq:10a}{\rm s.t.}~  & 0 \leq p_{ij, t}^{+} \leq C_{ij}^{\max}:~ \quad \quad ~~\ubar{\nu}_{ij, t},  \bar{\nu}_{ij, t} \geq 0, ~\forall i, j, t. \\
& \PP_{i, t} \!\leq\!  p_{i, t}^{\rm r}  \!+\! p_{i, t}^{\rm dis} \!-\! p_{i, t}^{\rm ch} \!+\! \smallsum_{j} p_{ij, t}^{+} \!-\!\!\smallsum_j p_{ji, t}^{+}: \notag\\
\label{eq:10b}&\quad \quad \quad \quad \quad \quad \quad \quad \quad \quad \quad \quad \quad \quad~~\mu_{i, t} \geq 0, \forall i, t. \\
\label{eq:10c}& \PP_i^{\min} \leq \PP_{i, t}\leq \PP^{\max}_i:~ \quad \quad~~ \ubar{\sigma}_{i, t}, \bar{\sigma}_{i, t} \geq 0, ~\forall i, t.  \\
\label{eq:10d}& e_{i, t + 1} = e_{i, t} + p_{i, t}^{\rm ch} \eta - p_{i, t}^{\rm dis}/\eta: \quad  ~\mu_{i, t}^{\rm e} \in \R,  ~\forall i, t. \\
\label{eq:10e}& 0 \leq p_{i, t}^{\rm ch}  \leq P_i^{\rm ch, \max}:  \quad \quad \quad   ~~\ubar{\mu}^{\rm ch}_{i, t},  \bar{\mu}_{i, t}^{\rm ch} \geq 0, ~\forall i, t.\\
\label{eq:10f}& 0 \leq p_{i, t}^{\rm dis} \leq  P_i^{\rm dis, \max}:   \quad \quad  ~~~ \ubar{\mu}_{i, t}^{\rm dis},  \bar{\mu}_{i, t}^{\rm dis} \geq 0,  ~\forall i, t. \\
\label{eq:10g}&  e_i^{\min} \!\leq\! e_{i, t} \leq  e_i^{\max}: \quad  \quad \quad  ~~~ \ubar{\mu}_{i, t}^{\rm e},  \bar{\mu}_{i, t}^{\rm e} \geq 0,  ~\forall i, t. \\
\label{eq:10h}& u_{i, t} \leq U_i^k(\PP_{i, t}): \quad  \quad \quad \quad  \quad  \quad\delta_{i, k, t} \geq 0, ~\forall i, k, t.  
	\end{alignat}
\end{subequations}}
where  the decision variable $\mathbf{P}^{-} = [p_{ij, t}^{-}], \forall i, j, t$  are removed based on  $p_{ij, t}^{+} = p_{ji, t}^{-}, \forall i, j, t$. Besides,  some auxiliary variables  $u_{i, t}$ are introduced to relax the non-smooth prosumer utility functions. Since the utility function is concave, it is easy to prove  that   \eqref{eq:reformulated_lower_level} is equivalent to \eqref{eq:lower-level}.  Additionally,   the  dual variables  for  the constraints are defined  the right-hand side.

% in order to draw the Karush–Kuhn–Tucker (KKT) conditions. 
%In addition, we have define the dual variables for the constraints so as to derive the KKT equations. 
%Since the lower-level problem \eqref{eq:lower-level} is convex and feasible, we have the strong duality property hold, which states the optimal solution can be characterized by the 
For the reformulated lower-level problem \eqref{eq:reformulated_lower_level}, we can draw the  Karush–Kuhn–Tucker (KKT) conditions \cite{boyd2004convex}. 
We first have the first-order optimality conditions:   
\begin{subequations} 
	\begin{alignat}{4}
\label{eq:11a}& \partial L/\partial \PP_{i, t}  \!=\! \mu_{i, t} \!-\! \ubar{\sigma}_{i, t}  \!+\! \bar{\sigma}_{i, t}  \!-\! \smallsum_k \delta_{i, k, t} \nabla U_{i,t}^k(\PP_{i, t}) \!=\! 0\\
& \partial \mathbb{L}/\partial p_{ij, t}^{+} \!=\!\gamma d_{ij} - \ubar{\nu}_{ij, t}  + \bar{\nu}_{ij, t} - \mu_{i, t} + \mu_{j, t}  = 0\\
& \partial \mathbb{L}/\partial u_{i, t} \!=\!  - 1  +  \smallsum_k \delta_{i, k, t} = 0 \\
&\partial \mathbb{L}/\partial p_{i, t}^{\rm ch} \!=\! \mu_{i, t} -  \mu_{i, t}^{\rm e} \eta- \ubar{\mu}_{i, t}^{\rm ch} =0\\
& \partial \mathbb{L}/\partial p_{i, t}^{\rm dis} \!=\! - \mu_{i, t} + \mu_{i, t}^{\rm e}/\eta - \ubar{\mu}_{i, t}^{\rm dis} + \bar{\mu}_{i, t}^{\rm dis} = 0\\
\label{eq:11f}& \partial L/\partial e_{i, t} \!=\! - \mu_{i, t}^{\rm e} \!+\! \mu_{i, t-1}^{\rm e}  \!-\! \ubar{\mu}_{i, t-1}^{\rm e}  \!+\! \bar{\mu}_{i, t-1}^{\rm e}  = 0, \forall  t > 1
	\end{alignat}
\end{subequations}
where we use $\mathbb{L}$ to denote the Lagrangian function associated with  \eqref{eq:reformulated_lower_level}. 
Based on \eqref{eq:utility}, we have $\nabla U_{i, t}^k(\PP_{i, t}) = \beta_i^k$ which represents the slope of the prosumer $i$'s  utility function at the $k$-th segment.

In addition,  we have the complementary constraints for the inequality constraints \eqref{eq:10a}-\eqref{eq:10h}. Using \eqref{eq:10d} as an example, we have the complementary constraints:
\begin{align} \label{eq:complementary}
\mu_{i, t}\big( \PP_{i, t} \!-\!  p_{i, t}^{\rm r}  \!-\! p_{i, t}^{\rm dis} \!+\! p_{i, t}^{\rm ch} \!-\! \smallsum_{j} p_{ij, t}^{+} \!+\!\!\smallsum_j p_{ji, t}^{+}\big) \!=\!0, \forall i, t.
\end{align}

The general way to handle the non-linear complementary constraints is  to introduce binary variables to relax the constraints (see \cite{feng2020stackelberg} for an example). 
This could be problematic for problem \eqref{eq:reformulated_lower_level} due to the large  number of inequality constraints. 
To deal with the computational challenges, we make use of  the linear programming (LP) structure of  problem \eqref{eq:reformulated_lower_level}.  For a LP,  we have the  \emph{strong duality} and the \emph{complementary constraints} are  interchangeable (see \cite{Bradley1977}, Ch4, pp. 147 for detailed proof). Therefore, we use the strong duality condition for problem \eqref{eq:reformulated_lower_level} to replace the complementary constraints, such as \eqref{eq:complementary}.  
We have  the strong duality for problem \eqref{eq:reformulated_lower_level}: 

{\small 
\begin{equation} \label{eq:strong_duality}
\begin{split}
&\!-\! \sum_t \sum_i \sum_j  \bar{\nu}_{ij, t} C_{ij}^{\max}\! - \!\sum_t \sum_i \mu_{i, t} p_{i, t}^{\rm r} \!\!+\!\!\sum_t \sum_i \ubar{\sigma}_{i, t} \PP_{i, t}^{\min} \\
&\!-\!\!\sum_t \!\sum_i\bar{ \sigma}_{i, t} \PP_{i, t}^{\max} \!\!-\!\!\sum_t\! \!\sum_i  \bar{\mu}_{i, t}^{\rm ch} P_i^{\rm ch, \max} \!\!-\!\! \sum_t \!\sum_i  \bar{\mu}_{i ,t}^{\rm dis} P_i^{\rm dis, \max} \\
& \!\!+\!\!\sum_t \sum_i  \ubar{\mu}_{i, t}^{\rm e}  e_i^{\min} 
\!\!-\!\!\sum_t \sum_i \bar{\mu}_{i, t}^{\rm e} e_i^{\max} \!\!-\! \!\sum_t \sum_i \sum_k \delta_{i, k, t} U_i^k(0) \\
& = \sum_t \sum_i \sum_j T(p_{ij, t}^{+}) - \sum_t \sum_i u_{i ,t} 
\end{split}
\end{equation}
}

%As such, we can  refer to  the 
%strong duality theorem \eqref{eq:strong_duality} instead of the  complementary constraints while implying the KKT equations for  \eqref{eq:reformulated_lower_level}. 

Note that the strong duality \eqref{eq:strong_duality} can be used to eliminate the large number of  non-linear complementary constraints  but requires to tackle the bi-linear terms related to the network charge calculations:    $T(p_{ij, t}^{+}) =\gamma d_{ij}  p_{ij, t}^{+}$. To  handle such bi-linear terms,  we discretize  the network charge price and convert the non-linear terms into mixed-integer constraints. Specifically, we first define an auxiliary variable $Z$:
$$ Z = \sum_t\sum_i \sum_j  d_{ij} p_{ij, t}^{+}$$

We thus  have the  total network charge for  P2P transaction: 
\begin{equation} \label{eq:non-linear}
\begin{split}
\sum_t\sum_i \sum_j T(p_{ij, t}^{+})  = \gamma Z
\end{split}
\end{equation}

We discretize the range of  network charge price   $[\gamma_{\min}, \gamma_{\max}]$   into $L$ levels $\{ \gamma_{1},  \gamma_2, \cdots, \gamma_L\}$ with an equal  interval $\Delta \gamma =  (\gamma_{\max} -\gamma_{\min})/L$.
Accordingly,  we introduce the binary variables $\mathbf{x} = [ x_\ell ], \ell = 1, 2, \cdots, L$ to indicate which  level of network charge price is selected, we thus have 
\begin{align}
& \gamma Z = \smallsum_{\ell =1} ^L x_{\ell} \gamma_{\ell} Z  \\
\label{eq:linearization}&\smallsum_{\ell =1}^L x_{\ell} =1, ~x_{\ell} \in \{0, 1\}
\end{align}

Note that the network charge calculations rely on the product of binary variable $x_{\ell}$ and continuous variable $Z$. This can be 
equivalently expressed by  the integer algebra:
\begin{align} 
\label{eq:integer_algebra1}&\quad \quad ~ -M x_{\ell} \leq Y_{\ell} \leq M x_{\ell} \\
\label{eq:integer_algebra2}& -M(1-x_{\ell})  \leq Z - Y_{\ell} \leq M(1-x_{\ell})%, ~\forall \ell = 1, 2, \cdots, L 
\end{align}
where we have $\gamma Z  = \sum_{\ell=1}^L \gamma_{\ell} Y_{\ell}$ and  $M$ is a sufficiently large positive constant.

By plugging $\smallsum_t \smallsum_i \smallsum_j T(p_{ij, t}^{+})  = \gamma Z=\smallsum_{\ell=1}^L \gamma_{\ell}Y_{\ell}$ in \eqref{eq:strong_duality}, and by 
replacing the lower level problem \eqref{eq:reformulated_lower_level} with KKT conditions,   we have the following  single-level mixed-integer quadratic programming (MIQP): 
\begin{subequations}
	\begin{alignat}{4}
\label{pp:single-level} \max_{\x_U, \x_L, \bm{\lambda}}~&\text{Profit} =  \sum_{\ell = 1}^L \gamma_\ell Y_\ell -  \rho \!\!\!\sum_{(i,j) \in \LL} \!\! \! b_{ij} (\theta_i - \theta_j) ^2 \tag{$P$}\\
& \text{s.t.} ~\eqref{eq:10a}-\eqref{eq:10h}.~~~~~~~~\text{Primal constraints} \notag \\
& ~~~~\eqref{eq:11a}-\eqref{eq:11f}.  ~~~~~~~~~\text{KKT conditions}  \notag \\
&~~~~ \eqref{eq:strong_duality}, \eqref{eq:linearization}-\eqref{eq:integer_algebra2}~~~~\text{Strong~duality} \notag 
		\end{alignat}
\end{subequations}
where $\lambda = [\bm{\ubar{\nu}},  \bm{\bar{\nu}}, \bm{\mu}, \bm{\ubar{\sigma}},  \bm{\bar{\sigma}},  \bm{\mu}^{\rm e},  \bm{\ubar{\mu}}^{\rm ch},  \bm{\bar{\mu}}^{\rm ch}, \bm{\ubar{\mu}}^{\rm dis},  \bm{\bar{\mu}}^{\rm dis}, \bm{\ubar{\mu}}^{\rm e},  \bm{\bar{\mu}}^{\rm e}, \bm{\delta}]$ are the dual variables. 
Note that this single-level conversion favors computation   as   the number of binary variables ($L$)  is  only determined by the granularity of  network charge discretization and independent of the scale of prosumers, making it possible to accommodate a reasonable scale of prosumers.

\section{Case Studies}
In this section,  we evaluate the  performance of the  proposed network charge mechanism via simulations.  We first use  IEEE 9-bus system to evaluate the effectiveness of the solution method, the existence of \emph{equilibrium} network charge price,  and the social welfare.
We further evaluate the performance  on the larger electrical networks including  IEEE 39-bus, 57-bus,  and 118-bus systems.  Particularly, we compare the results  with and without ES on the prosumer side in the case studies. 

% framework in shaping the P2P energy trading market (market outcome, social benefits, transmission cost, profit distribution) via case studies. 
% We first introduce the simulation set-ups.   
\begin{table}[h]
	\setlength\tabcolsep{3pt}
	\centering
	\caption{Simulation set-ups}
	\label{tab:simulation_setup}
	\begin{tabular}{lll}     
		\toprule
		Param. &  Definition~~~~~ & Value \\
		\hline
		$T$  & Time periods & 24  \\
		$\alpha_{i, t}$  & Proumer PWL utility constant & 0 \\
		$\beta_{i, t}^k$  & Prosumer PWL utility slopes & $[0, 1]$ \\
		$K$     & Prosumer PWL utility segments  & 2 or 3 \\
		$[\gamma^{\min}, \gamma^{\max}]$  & Network charge price range & [0, 1] $\si{s\$/(\kilo\watt\cdot \kilo\meter)}$ \\
		$\Delta \gamma$ &   Network charge price discretization   & 0.02 $\si{s\$/(\kilo\watt\cdot \kilo\meter)}$ \\
		$L$  & Network charge discretization levels & 51 \\
		$e_i^{\min}/e_i^{\max}$  & Min./max storaged energy of  ES  & 0/60 \si{\kilo\watt\hour}\\
		$P_i^{\rm ch, \max}$  & Max.  charging  power & 50 \si{\kilo\watt} \\ 
		$P_i^{\rm dis, \max}$  & Max. discharging power & 50 \si{\kilo\watt} \\ 
		$\eta$ & Charging/discharging efficiency & 0.9 \\
		$\rho$ & Transmission loss cost  coefficient & 0.01  \\
		\bottomrule
	\end{tabular} 
\end{table}

\subsection{Simulation Set-ups}
We set up the case studies by rescaling the  real building load profiles  \cite{BuildingDemand} and the renewable  generation  profiles (i.e., wind and solar)  \cite{WindSolar}.  To capture the demand flexibility,  we set the lower prosumer demand as $\PP_{i,t}^{\min} =0$ (we  focus on the flexible demand) and the upper prosumer demand  as  $\PP_{i, t}^{\max}=\text{\emph{demand profile}}_{i, t} +30$ \si{\kilo\watt}. 
For each time period $t$, we uniformly generate the slopes of   prsumer PWL utility functions  in  $\beta_{i, t}^k \in [0, 1]$ with $K= 2$ or $3$ segments (we only consider customers in the following studies and the producers can be included by setting $\beta_{i, t}^k \in [-1, 0]$ if exist).   We set the constant components of PWL utility function as $\alpha_{i, t} = 0$ for all customers. 
 Correspondingly, we equally divide the ranges of prosumer demand $[\PP^{\min}_{i, t},  \PP^{\max}_{i, t}]$ into $K=2$ or $3$ segments to obtain the PWL utility function  transition points $\PP^k_{i, t}$. 
 We simulate the P2P market for 24 periods with a decision interval of one hour.  The  settings for the above parameters and  the prosumers'  ES   are gathered  in TABLE \ref{tab:simulation_setup}.  Particularly,  we set the range of network charge price as  $\gamma^{\min} = 0$ and $\gamma^{\max} = 1.0$ $\si{s\$/(\kilo\watt\cdot \kilo\meter)}$ and the discretization interval  as $\Delta \gamma = 0.02$ $\si{s\$/(\kilo\watt\cdot \kilo\meter)}$  based on the simulation results of Section IV-B, which  suggest  such settings are expected to provide solutions with sufficiently high accuracy.   Besides, the electrical distances  measured by PTDF for the concerned bus systems are directly obtained with the method in Section II-A.

In this paper,  we refer to the P2P market with the proposed network charge as \emph{Optimal P2P}. The  network charge price is obtained by solving problem \eqref{pp:single-level} with the \emph{off-the-shelf} solvers. 
In the following studies, we compare \emph{Optimal P2P} with \emph{No P2P} (P2P transaction is forbidden), \emph{Free P2P} (P2P transaction is allowed without any network charge form the prosumers) and \emph{Social P2P} (P2P  transaction is determined by maximizing the social profit which is the sum of  grid operator profit and prosumer profit defined in \eqref{eq:upper-level} and \eqref{eq:lower-level}). Note that the network charge with \emph{Social P2P} will be internalized  as the grid operator and the prosumers are unified as a whole.   For \emph{Free P2P}, the grid operator has no manipulation on the P2P market and  the optimal transaction can be determined by directly solving the lower-level problem \eqref{eq:lower-level} by removing the network charge components  (To ensure the uniqueness of the solution, we keep the network charge but set a sufficiently small  value).  
 In addition, we examine  the different markets  without and with ES on the prosumer side. For the case without ES, we set $e_i^{\max}, P_i^{\rm ch, \max}, P_i^{\rm dis, \max}$  as \emph{zero}. For the case with ES, we assume each prosumer has a ES with the configurations shown in TABLE \ref{tab:simulation_setup}.  The  market configurations for comparisons  are shown in TABLE \ref{tab:market_settings}.  We highlight \emph{Optimal P2P} and \emph{Optimal P2P + ES}  as our main focus.   
 
 %\red{The electrical distance between the prosumers is calculated based on the PTDF and the readers can refer to our supplementary materials for the specific value}. 

\begin{table}[h]
	\setlength\tabcolsep{3pt}
	\centering
	\caption{Market configurations for comparison}
	\label{tab:market_settings}
	\begin{tabular}{lllll}     
	\toprule
Market &  ES~~~~~~ ~~  & P2P~~~~~ & Network charge~~~  \\
	\hline
No P2P   &   &   & \\
Free P2P  & &  $\checkmark$      &  \\
 Social P2P  &   & $\checkmark$  &  Internalized \\
\textbf{Optimal P2P}   &     & \textbf{$\checkmark$}  & \textbf{$\checkmark$} \\  
No P2P  + ES &  $\checkmark$  &   & \\
Free P2P + ES  & $\checkmark$ &  $\checkmark$      &  \\
 Social P2P + ES  &  $\checkmark$   & $\checkmark$  &  Internalized \\
\textbf{Optimal P2P + ES}~~~  &\textbf{$\checkmark$}      & \textbf{$\checkmark$}  & \textbf{$\checkmark$} \\  
		\bottomrule
	\end{tabular} 
	\end{table}

\begin{figure}[h]
	\centerline{\includegraphics[width=2.8 in]{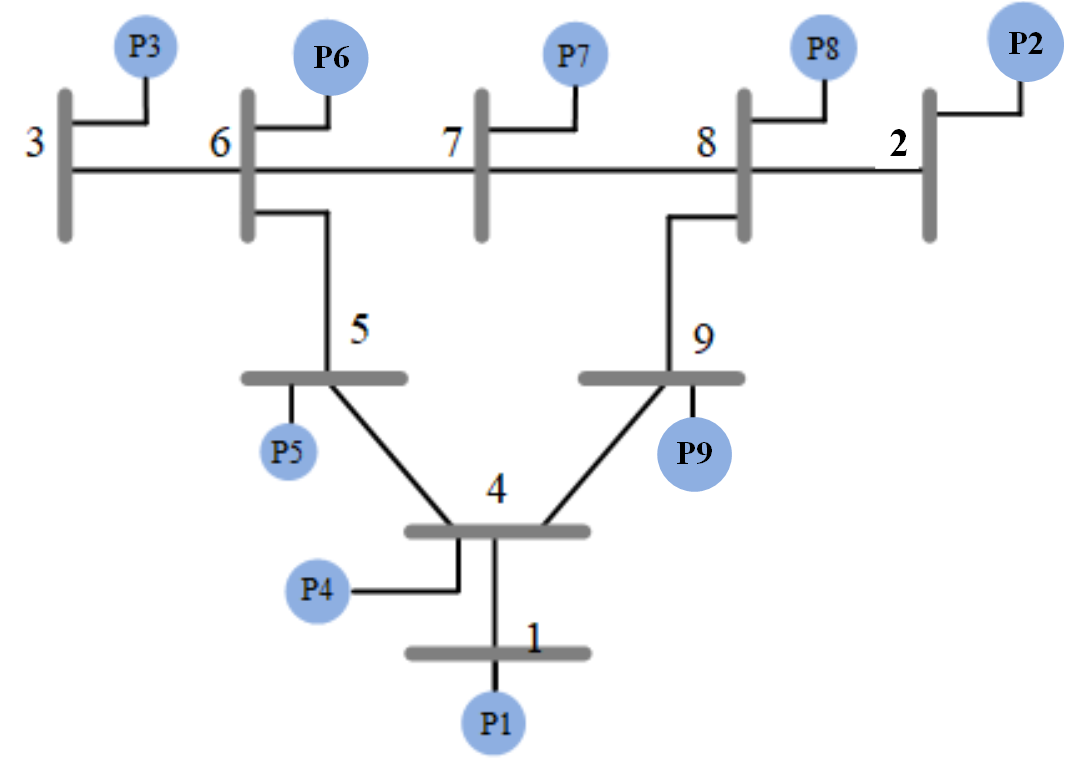}}
		\vspace{-3mm}
	\caption{IEEE-9-bus system with 9 prosumers (P1-P9).}
	\label{fig:IEEE-9-bus}
\end{figure}

\begin{figure}[h]
	\centerline{\includegraphics[width=2.8 in]{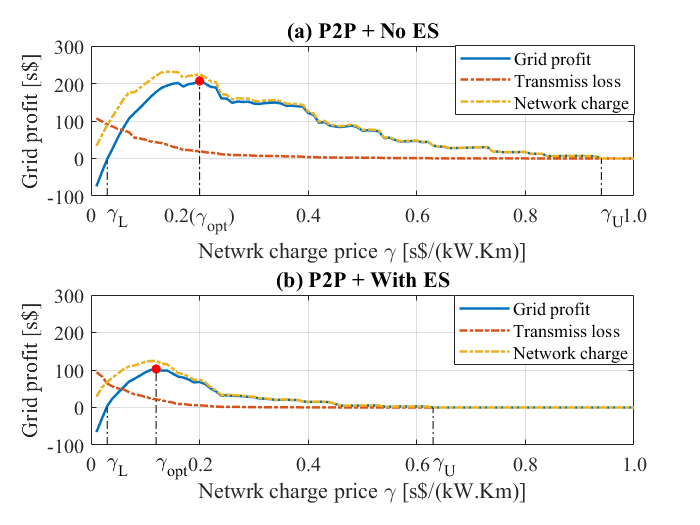}}
	\vspace{-3mm}
	\caption{Grid profit w.r.t. network charge price $\gamma$ for IEEE 9-bus system: (a) P2P + No ES. (b) P2P + With ES. ($\gamma_{\rm L}$: minimum network charge price for the grid to attribute transmission loss. $\gamma_{\rm opt}$: optimal network charge price for maximum grid profit. $\gamma_{\rm U}$: maximum network charge price that the prosumers would take.)}
	\label{fig:Networkcharge_simulation}
\end{figure}
\vspace{-3mm}
\subsection{IEEE 9-bus system}
We first use the small-scale IEEE 9-bus system  with 9 prosumers  shown in Fig. \ref{fig:IEEE-9-bus} to evaluate the proposed optimal network charge model. 
By  solving the \eqref{pp:single-level}, we can obtain the optimal network charge price   $\gamma_{\rm opt} =0.2$ $\si{s\$/(\kilo\watt\cdot \kilo\meter)}$ (No ES) and $\gamma_{\rm opt} =0.12$$\si{s\$/(\kilo\watt\cdot \kilo\meter)}$ (With ES).    To verify the solution accuracy, we compare the obtained solutions with that identified from  simulating the range of network charge price $\gamma \in [0, 1]$$\si{s\$/(\kilo\watt\cdot \kilo\meter)}$  with an incremental of $
 \Delta \gamma = 0.01$ $\si{s\$/(\kilo\watt\cdot \kilo\meter)}$.  For each simulated network charge price, we evaluate the grid profit, network charge, and transmission loss  defined in \eqref{eq:upper-level} and display their changes w.r.t. the network charge price   in Fig. \ref{fig:Networkcharge_simulation}.  From the results,  the optimal network charge price can be identified where the grid profit is maximized, which are  $\gamma_{\rm opt} =0.2$ $\si{s\$/(\kilo\watt\cdot \kilo\meter)}$ (No ES) and $\gamma_{\rm opt} =0.12$$\si{s\$/(\kilo\watt\cdot \kilo\meter)}$ (With ES) corresponding well to the obtained solutions.  This demonstrates the effectiveness of the proposed solution method. 
By further examining the  simulation results, we  can draw the following main conclusions. 
\subsubsection{\textbf{The network charge model admits an equilibrium network charge price}}
From Fig. \ref{fig:Networkcharge_simulation} (a) (No ES) and  \ref{fig:Networkcharge_simulation} (b) (With ES), we observe that the grid profit first approximately increases and begins to drop after reaching the optimal network charge price $\gamma_{\rm opt}$ with the maximum grid profit. Since for any given network charge price $\gamma$, there exists an optimal energy management strategy for the prosumers (i.e., there exists an optimal solution for the lower level problem \eqref{eq:lower-level}),  we imply that $\gamma_{\rm opt}$ is the \emph{equilibrium} network charge price. This demonstrates the existence of  \emph{equilibrium} for the proposed Stackelberg game, which is in line with  \textbf{Theorem} \ref{them:theorem1}. 

%We first examine how the grid profit is affected by the network charge price change. 
%Via the above simulations, we can obtain the  profit, transmission loss, and  network charge revenue  of the grid operator w.r.t.  the network charge price  in Fig. \ref{fig:Networkcharge_simulation} (a)(No ES) and (b)(With ES).  As defined in problem \eqref{eq:upper-level}, the grid profit equals  the network charge revenue minus the transmission loss.  
%From the results, we first can identify the optimal network charge price that maximizes the grid profit, which are  $\gamma_{\rm opt} =0.2$ $\si{s\$/(\kilo\watt\cdot \kilo\meter)}$ (No ES) and $\gamma_{\rm opt} =0.12$$\si{s\$/(\kilo\watt\cdot \kilo\meter)}$ (With ES). 

Besides, we can imply from the results that there exists a  minimal network charge price for the grid operator to attribute  the transmission loss. Such minimal network charge price occurs where the network charge revenue equals the transmission loss (i.e., the grid operator has \emph{zero} profit). Specifically,  the minimal  network charge price  is  $ \gamma_{\rm L} = 0.03\si{s\$/(\kilo\watt\cdot \kilo\meter)}$ both with and without ES for the tested case.   In addition, we note that there also exists  a  maximal network charge price that the prosumers would  take,   which are  $\gamma_{\rm U} =0.94$$\si{s\$/(\kilo\watt\cdot \kilo\meter)}$ (No ES) and  $\gamma_{\rm U}  = 0.6 \si{s\$/(\kilo\watt\cdot \kilo\meter)}$ (With ES) for the tested case.   When the network charge price exceeds the maximum price, we observe that no transaction happens in the P2P market.   We note that when the prosumers have ES, they  would take lower network charge price.  This is reasonable as the prosumers can  use the ES to shift surplus renewable generation for future use in addition to trade in the P2P market. 
This can be perceived from Fig \ref{fig:Networkcharge_simulation_trades} which shows the total P2P trades  in the P2P market w.r.t. the network charge price  both with and without ES.  We note that less trades will be made when the prosumers have ES than that of  No ES for any specific network charge price.  Moreover,  the total trades drop faster w.r.t. the increase of network charge price when the prosumers have  ES. This implies that the deployment of ES will make the prosumers more sensitive to the network charge price and impact the optimal  network charge price. 

\begin{figure}[h]
	\centerline{\includegraphics[width=3.0 in]{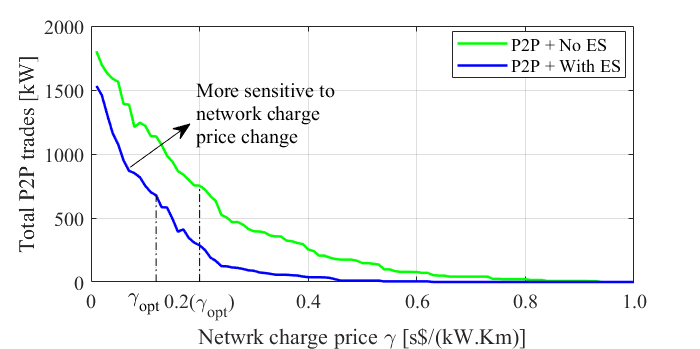}}
	\caption{Total P2P trades w.r.t. network charge price $\gamma$ for IEEE 9-bus system ($\gamma_{\rm opt}$: optimal network charge price).}
	\label{fig:Networkcharge_simulation_trades}
\end{figure}

\subsubsection{\textbf{The network charge  can benefit both the grid operator and the prosumers}}
From the above results, we conclude that the proposed optimal  network charge  can provide positive  profit to  the grid operator. 
This implies that the grid operator can benefit from empowering P2P energy trading. 
An interesting question to ask is how the economic benefit of P2P is shared by the grid operator and the prosumers with the proposed network charge mechanism. 
To answer that question,  we use \emph{No P2P} as the base and define the increased  profit for the grid  operator and the prosumers  as the \emph{benefit} harnessed from a specific P2P market.  
We compare the benefit of the two stakeholders with \emph{Optimal P2P} and \emph{Free P2P}. 
For \emph{Optimal P2P}, we impose the obtained optimal network charge price  $\gamma_{\rm opt}= 0.2$ $\si{s\$/(\kilo\watt\cdot \kilo\meter)}$ (No ES) and $\gamma_{\rm opt } = 0.1$$\si{s\$/(\kilo\watt\cdot \kilo\meter)}$.  For \emph{Free P2P}, we set a sufficiently small network charge price $\gamma=1e-7$$\si{s\$/(\kilo\watt\cdot \kilo\meter)}$  to ensure the uniqueness of solution as mentioned.  We evaluate the benefit of grid operator and  prosumers for each time  period and display the results over the 24 periods   in  Fig. \ref{fig:benefit_distribution_no_ES} (No ES) and Fig. \ref{fig:benefit_distribution_ES} (With ES).  We note that when there is no network charge (i.e., \emph{Free P2P} and \emph{Free P2P + ES}), the prosumers can gain considerable benefit from P2P transaction.  Whereas the grid operator will have to undertake the transmission loss which are in total  116.94s\$ (No ES) and 103.38s\$ (With ES).   Comparatively,   the proposed optimal network charge (i.e., \emph{Optimal P2P}, \emph{Optimal P2P + ES}) can provide positive benefit to both the grid operator and the prosumers, i.e.,  206.77s\$ vs. 158.67s\$ (No ES) and  103.01s\$ vs. 76.50s\$ (With ES).  We therefore imply that the optimal (\emph{equilibrium}) network charge price that maximizes the grid profit also secures the prosumers' profit. 
Moreover, the benefit of P2P is almost equally shared by the grid operator and the prosumers  (i.e., 57.38\% vs. 42.6\%). 

\begin{figure}[h]
	\centerline{\includegraphics[width=3.2 in]{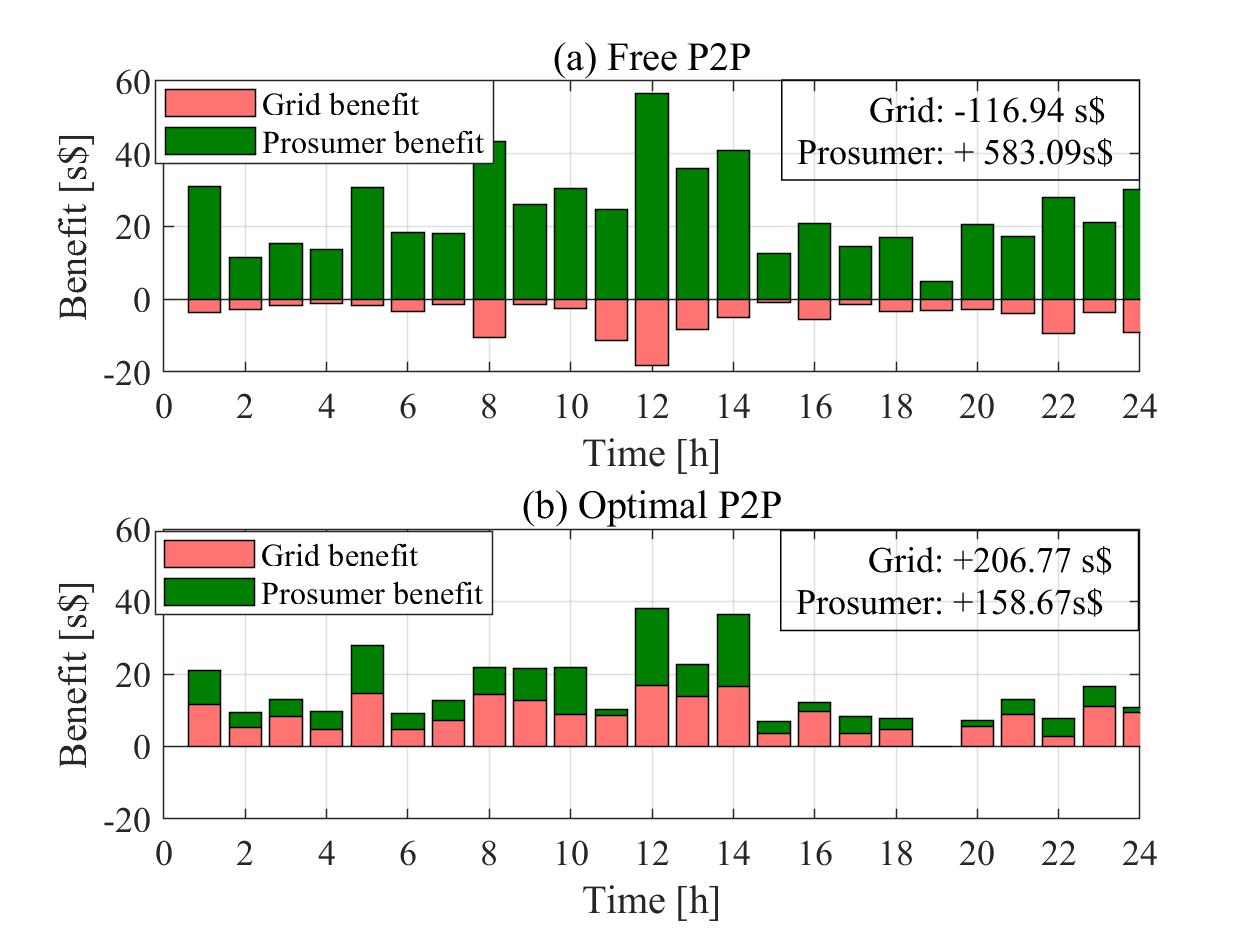}}
	\caption{Benefit of the grid  operator and  prosumers with \emph{Optimal P2P} and \emph{Free P2P} for IEEE 9-bus system (using No P2P as benchmark).}
	\label{fig:benefit_distribution_no_ES}
\end{figure}

\begin{figure}[h]
	\centerline{\includegraphics[width=3.2 in]{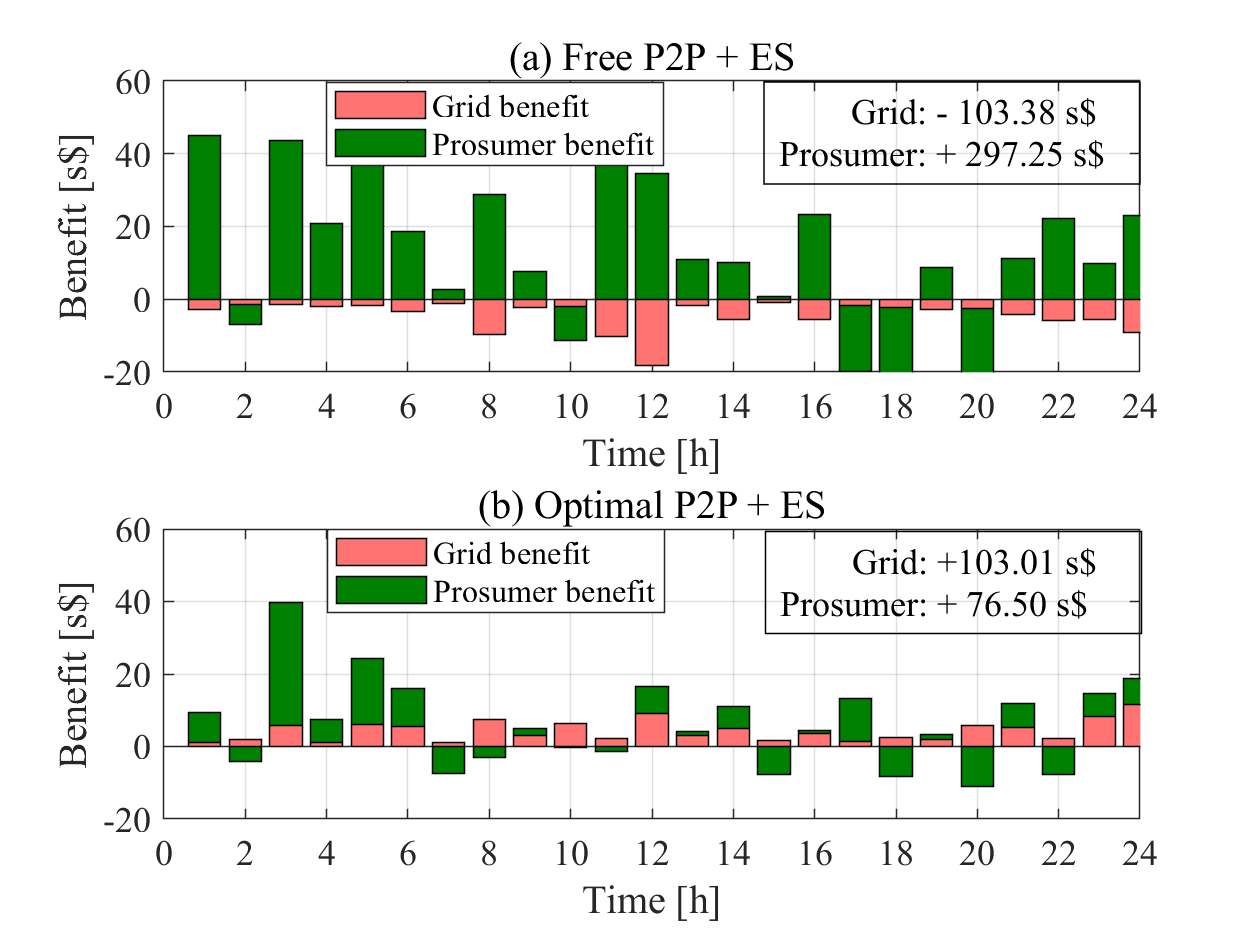}}
	\caption{Benefit of the grid  operator and  prosumers with \emph{Free P2P + ES} and  \emph{Optimal P2P + ES}  for IEEE 9-bus system (using No P2P as benchmark).}
	\label{fig:benefit_distribution_ES}
\end{figure}
%First of all, we note that \texttt{Free P2P} favors  prosumer benefits as indicated by the positive green histograms but discounts 
%the grid profit most of the periods as indicated by the negative red histograms.  %(i.e., negative benefit over \texttt{No P2P}).
%This implies the grid actually will  lose  money if P2P transaction were empowered for free compared with \texttt{No P2P}. 
%Oppositely,  with the optimum network charge (\texttt{Optimal P2P}), both the  power grid and the prosumers share positive benefits over  \texttt{No P2P}. 
% In more details, we  find    the prosumers gains \texttt{125.12s\$}  and the grid  loses  \texttt{6.08s\$} with  \texttt{Free P2P}. 
%Whereas the grid and the prosumers  gain \texttt{57.52s\$} and  \texttt{35.59s\$}, respectively with \texttt{Optimal P2P}. 
% One may note that the overall (grid plus prosumer) benefit s are not exactly the same for \texttt{Free P2P} and \texttt{Optimal P2P}. This is caused by the prosumer demand flexibility which implies  the actual energy use of prosumers is affected by the perceived  prices (e.g., network charge). 

\subsubsection{\textbf{The network charge  provides near-optimal social welfare}}
\emph{Social welfare} is one of the most important measures to be considered for market design.  For the concerned P2P market involving the grid operator and the prosumers,  the \emph{social welfare} refers to the sum profit  of  grid operator  and prosumers (i.e., social profit) and defined as   ${\rm Social~profit} = \smallsum_t\smallsum_i U_{i, t}(\PP_{i, t}) - \rho \smallsum_t \smallsum_{(i, j) \in \LL} b_{ij}(\theta_{i, t} - \theta_{j, t})^2$.  In this part, we study the  social profit yield by the proposed network charge model. To identify the social optimality gap,   we compare  \emph{Optimal P2P}  with  \emph{Social P2P}.  
%Similarly, for \emph{Optimal P2P}, we impose the obtained optimal network charge price  $\gamma_{\rm opt}= 0.2$ $\si{s\$/(\kilo\watt\cdot \kilo\meter)}$ (No ES) and $\gamma_{\rm opt } = 0.1$$\si{s\$/(\kilo\watt\cdot \kilo\meter)}$.  For \emph{Social P2P}, we solve the problem of minimizing the sum profit of the grid operator and the prosumers where the network charge is internalized as aforementioned. 
 We evaluate the social profit for each time period with \emph{Optimal P2P} and \emph{Social P2P},  and display the results over the 24 periods in Fig.  \ref{fig:social_profit} (a) (No ES) and Fig.  \ref{fig:social_profit} (b) (With ES). 
To identify the social optimality gap, we fill the difference of social profit curves with \emph{Optimal P2P} and \emph{Social P2P} in blue.  Note that the \emph{positive area} can be interpreted as  the social optimality gap. From the results, we conclude that  the social optimality gap is about 4.70\% (No ES)  and $1.32\%$ (With ES). To be noted, though we observe a larger shaded area for the case with ES (see Fig. \ref{fig:social_profit} (b)),  the accumulated  positive area is quite small. This implies that the proposed network charge mechanism can provide \emph{near-optimal} social welfare.   
%The alternative positive and negative difference between 
 %\emph{Optimal P2P} and \emph{Social P2P} is caused by the ES, which shifts the social profit  across the time. 
 
We further study how the social profit is affected by the network charge price. Similarly, we simulate the range of network charge price  $\gamma \in [0, 1]$$\si{s\$/(\kilo\watt\cdot \kilo\meter)}$  with an incremental of $\Delta \gamma = 0.01$ $\si{s\$/(\kilo\watt\cdot \kilo\meter)}$. For each simulated network charge price, we evaluate the total social profit over the 24 periods.  As shown in Fig. \ref{fig:market_outcome}, we observe that the social profit first increases w.r.t. the network charge and begins to drop after the social optima is reached. Besides, we note that though the obtained optimal network charge price does not coincide with the social optima, the social optimality gap is quite small, which are only $4.70\%$ and $1.32\%$ (With ES) as discussed. 

\begin{figure}[h]
	\centerline{\includegraphics[width=3.2 in]{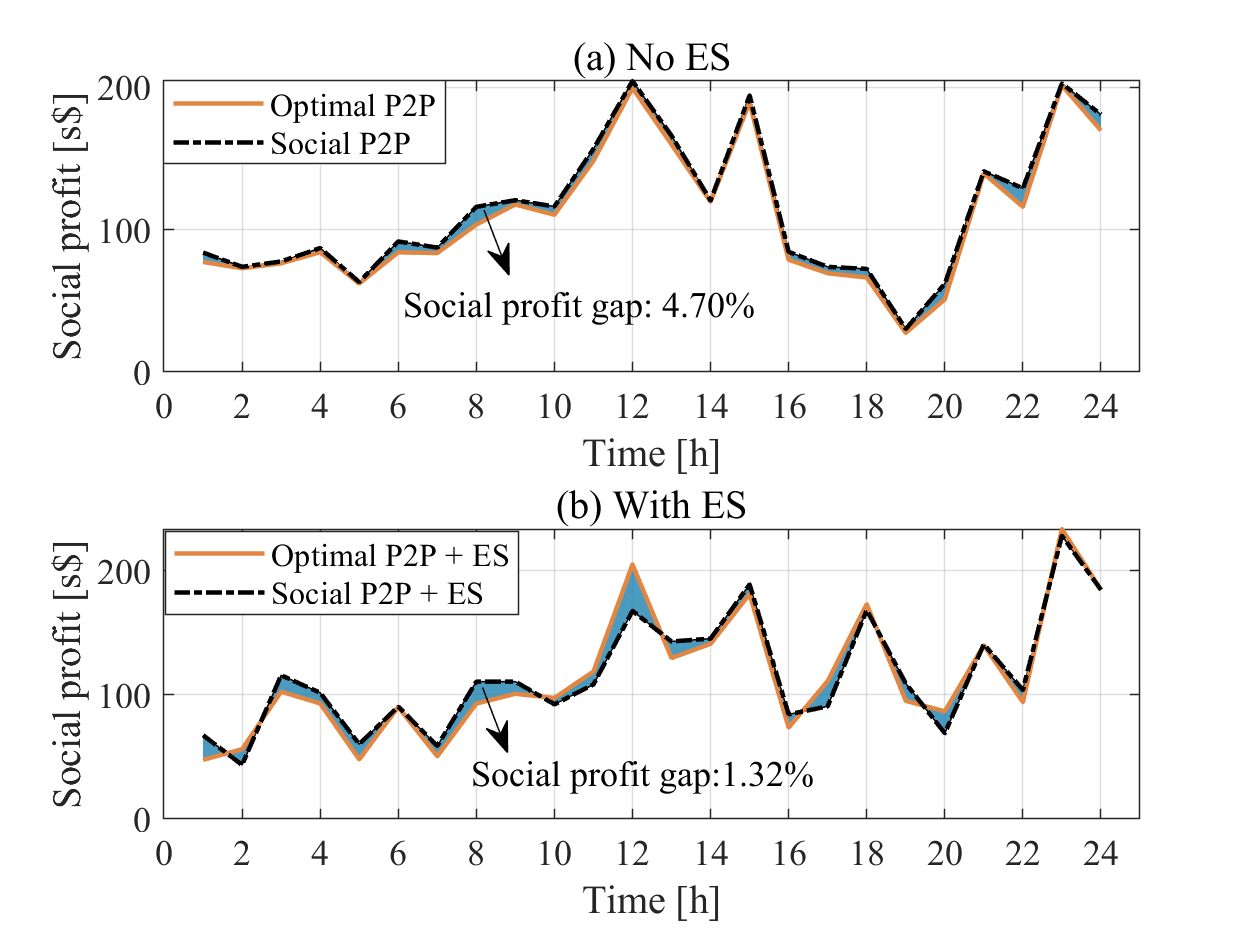}}
	\caption{Social profit for each time period for IEEE-9-bus system: (a) No ES. (b) With ES. (Positive shaded area represents the social profit loss of \emph{Optimal P2P} compared with \emph{Social P2P}).}
	\label{fig:social_profit}
\end{figure}

\begin{figure}[h]
	\centerline{\includegraphics[width=3.2 in]{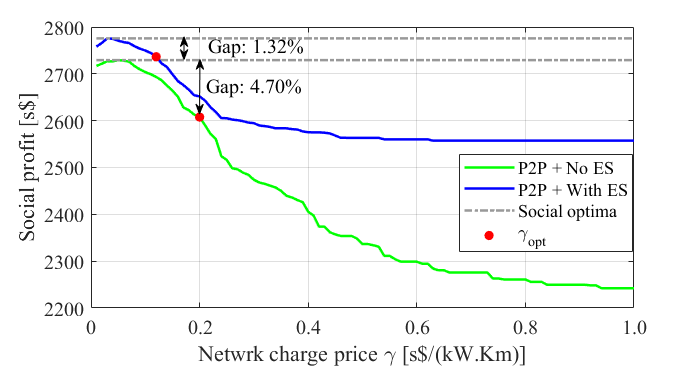}}
	\caption{Social profit w.r.t. network charge price for IEEE-9-bus system. (P2P + With ES: social optimality gap 1.32\%. P2P + No ES: social optimality gap 4.70\%.)}
	\label{fig:market_outcome}
\end{figure}

\subsubsection{\textbf{The network charge favors localized transaction and curbs long distancing transaction}} 
In this part, we study how the proposed network charge  shapes  the P2P  markets. 
We compare  \emph{Optimal P2P} with \emph{Free P2P} both with and without ES on the prosumer side. 
For each market, we calculate the  aggregated  transaction (in \si{\kilo\watt}) for the trading peers over the 24 periods and visualize the transaction in Fig. \ref{fig:trade_both}. The circles with IDs  indicate the prosumers and the line  thickness  represents  the  amounts of  P2P transaction.   We observe that the network charge has an obvious impact on the behaviors of prosumers in the P2P markets.  Specifically,  by comparing Fig. \ref{fig:trade_both} (a)  (No ES) and Fig. \ref{fig:trade_both} (b) (With ES), we notice that the network charge 
dose not affect the transaction between the prosumers in close proximity (e.g., 3-6, 7-8, 2-8, 1-4) but  obviously discourages  the  long distancing  transaction (e.g., 4-6, 1-6, 1-9). 
This is reasonable as  the network charge  counts on the electrical distance.  Therefore,  the proposed  network charge  model favors localized  transaction and curbs the long  distancing transaction. This makes sense considering the  transmission losses related to the long distancing transaction.  For the case with ES, we can draw the similar conclusion  from Fig. \ref{fig:trade_both} (a) (No ES) and  Fig. \ref{fig:trade_both} (b) (With ES). 

\begin{figure}[h]
	\centerline{\includegraphics[width=3.2 in]{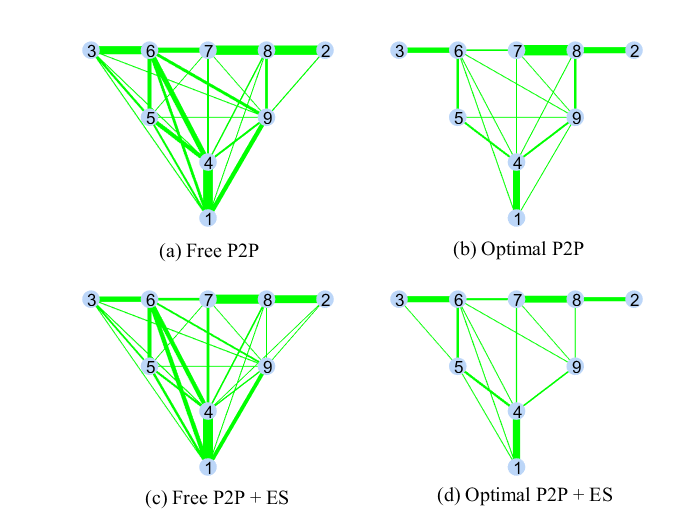}}
	\caption{Total P2P trades of 24 periods across the prosumers for IEEE 9-bus system (line thickness represents the amounts of transaction).}
	\label{fig:trade_both}
\end{figure}

\subsection{IEEE 39-bus, 57-bus and  118-bus systems}
%Considering the fast increase of the computation related to the problem scale, we do not solve the problem directly based on the proposed method. 
%Instead, we approach the optimal network charge price by ``binary search".  This is reasonable considering the characteristics of the grid profit w.r.t. the network charge as shown in Fig. \ref{fig:Networkcharge_simulation}.  \blue{We also verify the same feature for large networks and the readers can refer to the supplementary materials for details. }. 

\begin{table*}[h]
	\setlength\tabcolsep{3pt}
	\centering
	\caption{Outcomes of different P2P markets}
	\label{tab:larger-scale-cases}
	\begin{tabular}{cl rrr rr r}     % N = 10 >
		\toprule 	
		\multirow{3}{*}{\textbf{System}} &
		\multirow{3}{*}{\textbf{Market}}  & 
		\multicolumn{3}{c}{\textbf{Grid-wise}} &  
		\multicolumn{2}{c}{\textbf{Prosumer-wise}} &  
		\multicolumn{1}{c}{\textbf{System-wise} }  \\ 
		
		\cmidrule(lr){3-5}
		\cmidrule(lr){6-7}
		\cmidrule(lr){8-8}
		
		& &  Transmission loss
		&  Network charge
		& Grid profit    
		&  Prosumer  profit 
		&  Total transaction 
		& Social profit  \\
		
		 &  & \multicolumn{1}{c}{$\times10^2$[s\$] } 
		&  \multicolumn{1}{c}{$\times10^2$[s\$]} 
		&  \multicolumn{1}{c}{$\times10^2$[s\$]}
		&  \multicolumn{1}{c}{ $\times10^2$[s\$]} 
		& \multicolumn{1}{c}{ $\times10^2$[\si{\kilo\watt\hour}] }
	   &  \multicolumn{1}{c}{$\times10^2$[s\$]}\\   
		\midrule  % N= 10
		\multirow{8}{*}{9-bus}  
& No P2P               &  0          & 0               & 0                 &  22.42      &  0            &  22.42             \\
&Free P2P             & 1.17       &  0              & -1.17          &   28.25      &  17.43      & 27.10      \\ 
&Social P2P           & --           & --              & --                 &  --             & --               & 27.35 		 \\   
& \textbf{Optimal P2P}     &  \textbf{0.19}        &  \textbf{2.26}        &  \textbf{2.07}            & \textbf{24.00}         &  \textbf{7.82}        & \textbf{26.08}         \\  
		 \cline{2-8}
& No P2P+ES              & 0            &  0               &  0                 &   25.57        &   0             &   25.57       \\
&Free P2P+ES          &  1.03      &  0                & -1.03            &   28.54       &    15.57      &  27.51     \\ 
&Social P2P+ES         &--              & --                & --                 &  --               &   --                 &  27.87	        		 \\   
& \textbf{Optimal P2P+ES}     &  \textbf{0.28}        &  \textbf{1.22}        &  \textbf{0.94}              &  \textbf{26.56}         &  \textbf{7.64}        & \textbf{27.50}           \\  
\midrule % N =30
\multirow{8}{*}{39-bus}  
& No P2P              &0          &  0              &0               &  110.86          &    0                &   110.86  \\
& Free P2P           &3.27      &   0             &  -3.27       & 151.03       &  112.64           &   147.76        \\ 
& Social P2P        & --          & --               & --              &  --                &   --                  &     148.15    \\     
& \textbf{Optimal P2P}    & \textbf{0.55}     & \textbf{14.79}        &  \textbf{14.24}               &   \textbf{123.44}         
&  \textbf{52.33}           &      \textbf{137.68}           \\  		
\cline{2-8}
& No P2P+ES             & 0         &  0               & 0                &  124.69      &    0                &    124.69  \\
& Free P2P + ES         & 2.62      &  0              & -2.62          & 154.38       &  107.69           &    151.76  \\ 
& Social P2P + ES       & --            & --              & --                &  --             &   --                  &    152.18     \\       	
& \textbf{Optimal P2P+ES}    & \textbf{0.60}     & \textbf{11.26}        & \textbf{10.65}              &  \textbf{134.07}      
&  \textbf{ 45.17}      &     \textbf{144.73}          \\
\midrule   % N =50 
\multirow{8}{*}{57-bus}  
& No P2P            & 0             & 0                    & 0                 &  191.63     & --           &  191.63   \\
& Free P2P         &  65.52       &  0                 &  -65.52          &  26.20     &   185.61    & 196.53  \\ 
& Social P2P       & --             & --                   & --                  & --           & --             &  232.05\\  
& \textbf{Optimal P2P}   &  \textbf{4.72}        & \textbf{22.40}        &  \textbf{17.68}      &          \textbf{205.94}      &   \textbf{61.16}          &  \textbf{223.61}     \\
		 \cline{2-8}
& No P2P + ES             & 0          & 0                 &0              &   212.50    & 0          &  212.50         \\
& Free P2P + ES          &  66.20   & 0                 &   -66.20           &  272.92   &  177.23     &206.75  \\ 
& Social P2P + ES       &--                  & --                   & --                         &  --     &   --                    &   240.19\\        
& \textbf{Optimal P2P+ES}   &  \textbf{5.39}        & \textbf{18.49}        &  \textbf{13.10}      &          \textbf{222.52}      &   \textbf{52.45}          &  \textbf{235.62}     \\    
		\midrule   % N = 100
\multirow{8}{*}{118-bus}  
& No P2P               &   0        &  0             & 0          &427.68     & 0        & 427.68    \\
& Free P2P             & 111.20       &  0               & -111.20             &  567.38  &   367.09       &     455.64 \\ 
& Social P2P           & --                    & --                   & --                         &  --     &   --                    &    530.84\\ 
&\textbf{Optimal P2P}        &   \textbf{5.33}         & \textbf{46.97}        &  \textbf{41.63}             &  \textbf{46.76}       &\textbf{149.59}          &  \textbf{509.18}    \\
\cline{2-8}
& No P2P  + ES             &   0        & 0               &0            & 474.84   & 0       &  474.84    \\
& Free P2P  + ES           & 177.47     &  0           &  -177.47            &  603.83      &  391.05      &   426.35  \\ 
& Social P2P  + ES         & --                 & --                   & --                         & --   &   --                    &   557.79 \\    
&\textbf{Optimal P2P + ES}        &   \textbf{5.46}         & \textbf{37.43}        &  \textbf{31.97}             &  \textbf{505.38}       &\textbf{128.27}          &  \textbf{537.35}    \\     
		\bottomrule 
	\end{tabular}
\end{table*}

We further examine the performance of the proposed network charge mechanism  by  simulating the  IEEE 39-bus,  57-bus, and 118-bus systems.  We follow the  same  simulation set-ups in Section IV-A and  compare  the different markets  in TABLE \ref{tab:market_settings}.
We report  the results  for  different markets  and bus systems  in TABLE \ref{tab:larger-scale-cases}.  
Particularly, we  group the results by  \emph{Grid-wise},  \emph{Prosumer-wise} and \emph{System-wise}. 
For \emph{Grid-wise}, we study the total transmission loss, network charge revenue and the grid profit.  For \emph{Prosumer-wise},  we are concerned  with the total  prosumer profit and total P2P transaction. 
For \emph{System-wise},  we  evaluate the social profit (i.e., grid operator profit plus prosumers' profit). 
Note that the results for IEEE 9-bus system are also included for completeness. The results associated  with \emph{Optimal P2P}  and \emph{Optimal P2P + ES} have been highlighted in bold as our main focus. 
Overall, for the larger electrical networks, we can draw similar results in Section IV-B.   

Specially,   the proposed optimal network charge can provide positive  profit  both to the power grid and the prosumers as with \emph{Optimal P2P} and \emph{Optimal P2P + ES} reported  in TABLE \ref{tab:larger-scale-cases}. 
Whereas \emph{Free P2P} and \emph{Free P2P + ES}  only favor the prosumers with considerable profit increase over \emph{No P2P} and will displease the power grid operator considering the uncovered transmission loss (i.e., negative profit for the grid operator). 
This implies that the network charge  is necessary to  enable  the successful deployment of P2P  market in the existing power system from the perspective of economic benefit. 

Besides, we can conclude that the proposed network charge is favorable considering the benefit of P2P shared by the grid operator and the prosumers.  Similarly, using \emph{No ES} as the benchmark, we define the  profit increase of the grid operator and the prosumers  as the benefit.  Based on the results in TABLE \ref{tab:larger-scale-cases}, we have the report regarding the benefit of grid operator and the prosumers 
 in  Fig. \ref{fig:benefit_distribution_large} (a)  (No ES) and \ref{fig:benefit_distribution_large} (b) (With ES).   
Notably, we see that the grid operator and the prosumers achieve almost equal benefit from the P2P market with all cases.  Specifically,  the benefit for the prosumers and  grid operator are about 49.0\% vs. 51.0\% (No ES)  and  48.4\% vs. 51.6\% (With ES) for the tested IEEE-118 bus systems.   This makes sense considering  the balance of the P2P market. 
\begin{figure}[h]
	\centerline{\includegraphics[width=3.4 in]{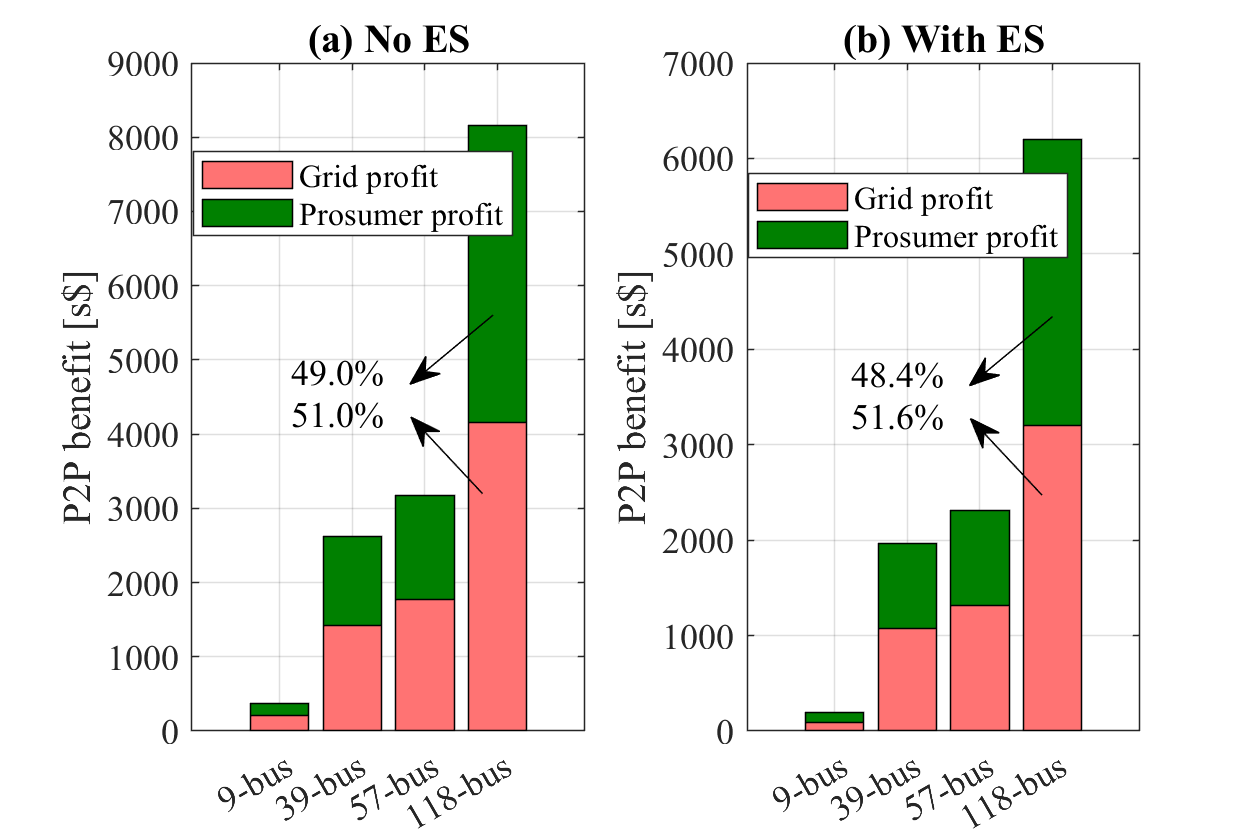}}
	\vspace{-3mm}
	\caption{Benefit of grid operator and the prosumers with \emph{Optimal P2P}: (a) No ES. (b) With ES. (using \emph{No P2P} as benchmark)}
	\label{fig:benefit_distribution_large}
\end{figure}

In addition,  we  conclude that the  network charge  can be used to shape the P2P markets.  For the tested bus systems, we compare the P2P  transaction of  \emph{Free P2P} and \emph{Optimal P2P} both with and without ES. 
Similarly, we visualize the total transaction over the 24 periods for the trading peers in  Fig.
\ref{fig:39-bus-transaction} (39-bus), \ref{fig:57-bus-transaction} (57-bus), \ref{fig:118-bus-transaction} (118-bus).  The circles with IDs indicate prosumers located at the buses  and  line thickness  represents the amounts of  transactions.   We note that the imposed network charge has an obvious impact on the energy trading behaviors of prosumers in the P2P market.  When there is no network charge and the grid operator has no manipulation on the P2P market,  the prosumers could  trade  regardless of the electrical distances, leading to massive long distancing trades. The could be problematic considering the high transmission loss and the possible network violations taken by the grid operator. 
Comparatively, the proposed network charge  favors localized transaction and discourages long distancing transaction,  yielding much lower transmission loss as reported in TABLE \ref{tab:larger-scale-cases}  (see Column 3). More importantly,  the  network charge is necessary for the grid operator to ensure the network constraints. 

 Last but not the least,  we conclude that the proposed network charge favors  social welfare. 
By examining the  \emph{System-wise} performance indicated  by  social profit in TABLE \ref{tab:larger-scale-cases} (Column 8),   we  notice that P2P transaction (i.e., \emph{Optimal P2P},  \emph{Free P2P} and  \emph{Social P2P})  favors  social profit over \emph{No P2P}.   More notably,  we find that \emph{Optimal P2P} and \emph{Optimal P2P + ES} provide  near-optimal social profit indicated by  \emph{Social P2P}.   Specifically,  the overall social optimality gap  is less than 7\% (No ES) and less than 5\%  (With ES) with \emph{Optimal P2P}  as  reported in Fig. \ref{fig:social_optimality},

 \begin{figure}[h]
 	\centerline{\includegraphics[width=3.4 in]{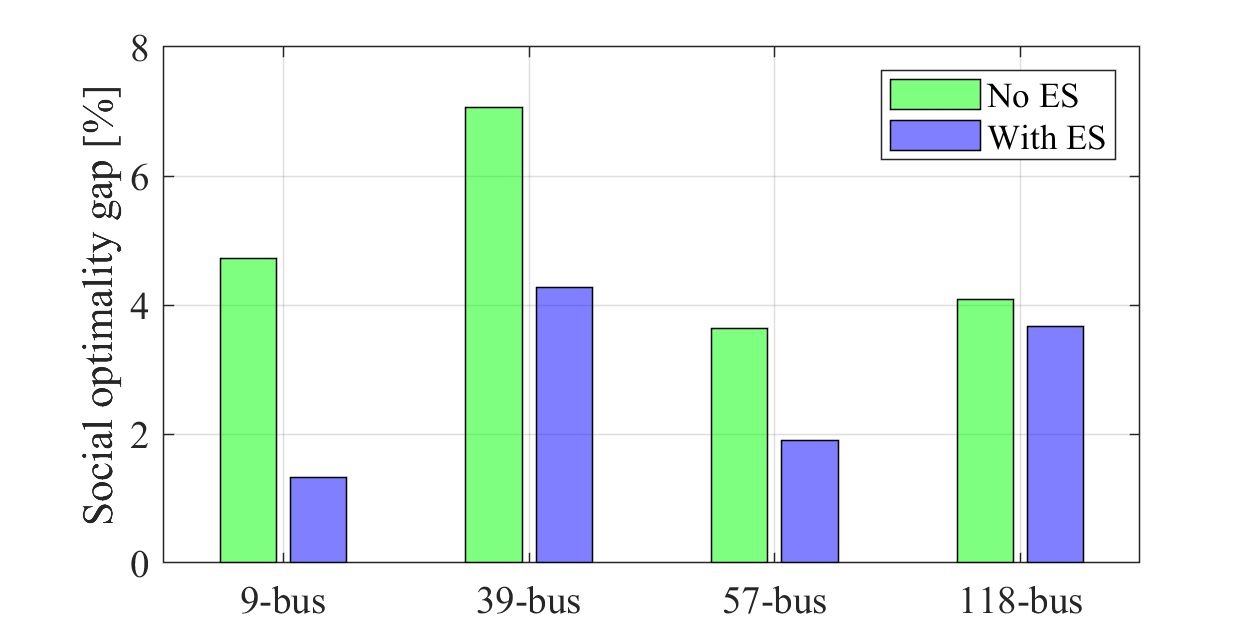}}
 	\vspace{-3mm}
 	\caption{Social optimality gap of \emph{Optimal P2P}.}
 	\label{fig:social_optimality}
 \end{figure}
%Beyond that, we can draw some other interesting  insights. For example,  \texttt{Network charge} dominates the \texttt{Grid profit} with \texttt{Optimal P2P}. 
%This is consistent with our previous observations and suggests that 
%the power grid may  mainly benefit from providing transmission service instead of  running non-renewable generation in future power systems with large amounts of distributed renewables.  
%Above all,  we conclude that the proposed optimum network charge is desirable to enable the integration of P2P transaction into the existing power systems. 

\begin{figure}[h]
	\centerline{\includegraphics[width=3.2 in]{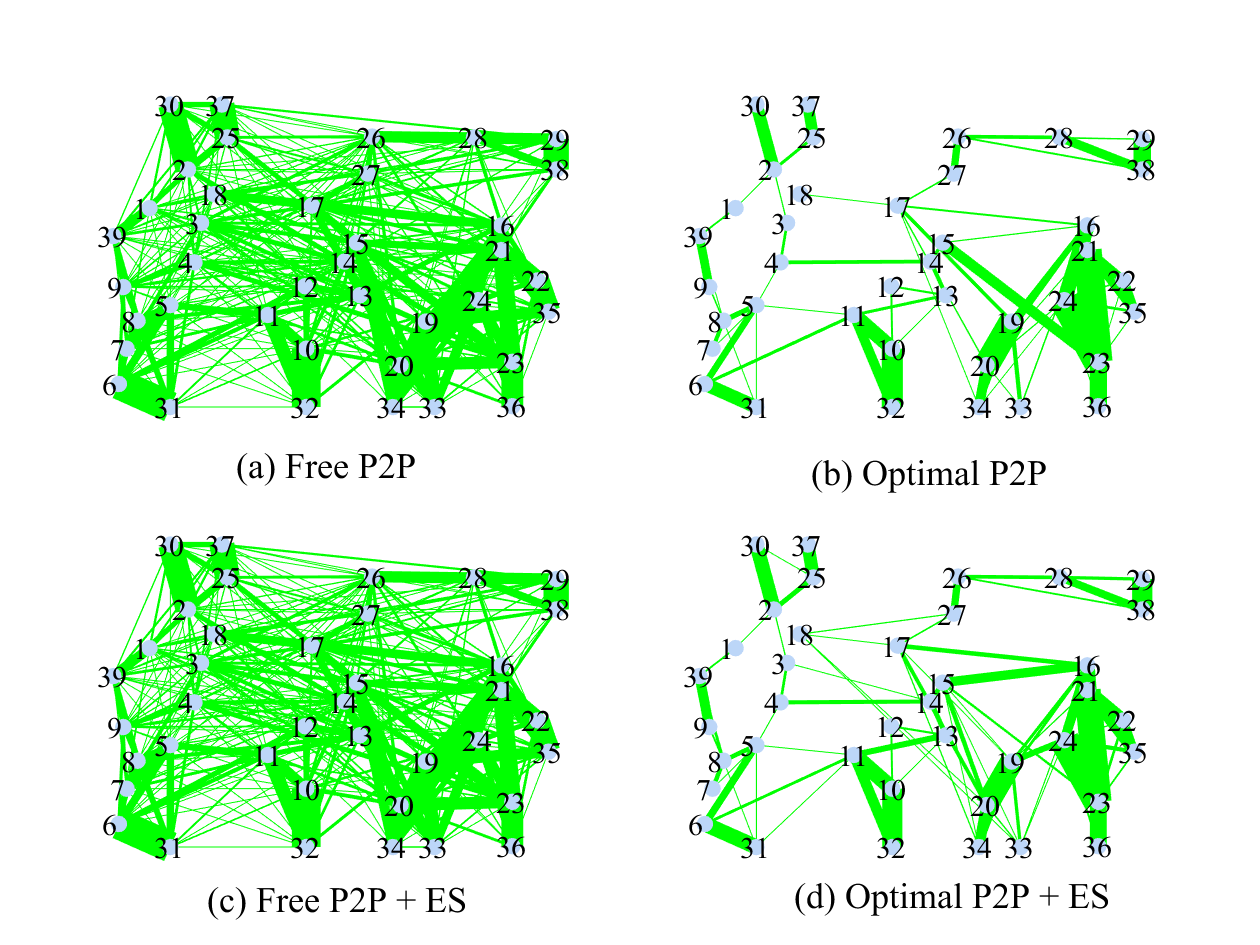}}
	\caption{Total P2P trades of 24 periods across the prosumers for IEEE 39-bus system (line thickness represents the amount of transaction).}
	\label{fig:39-bus-transaction}
\end{figure}
\begin{figure}[h]
	\centerline{\includegraphics[width=3.4 in]{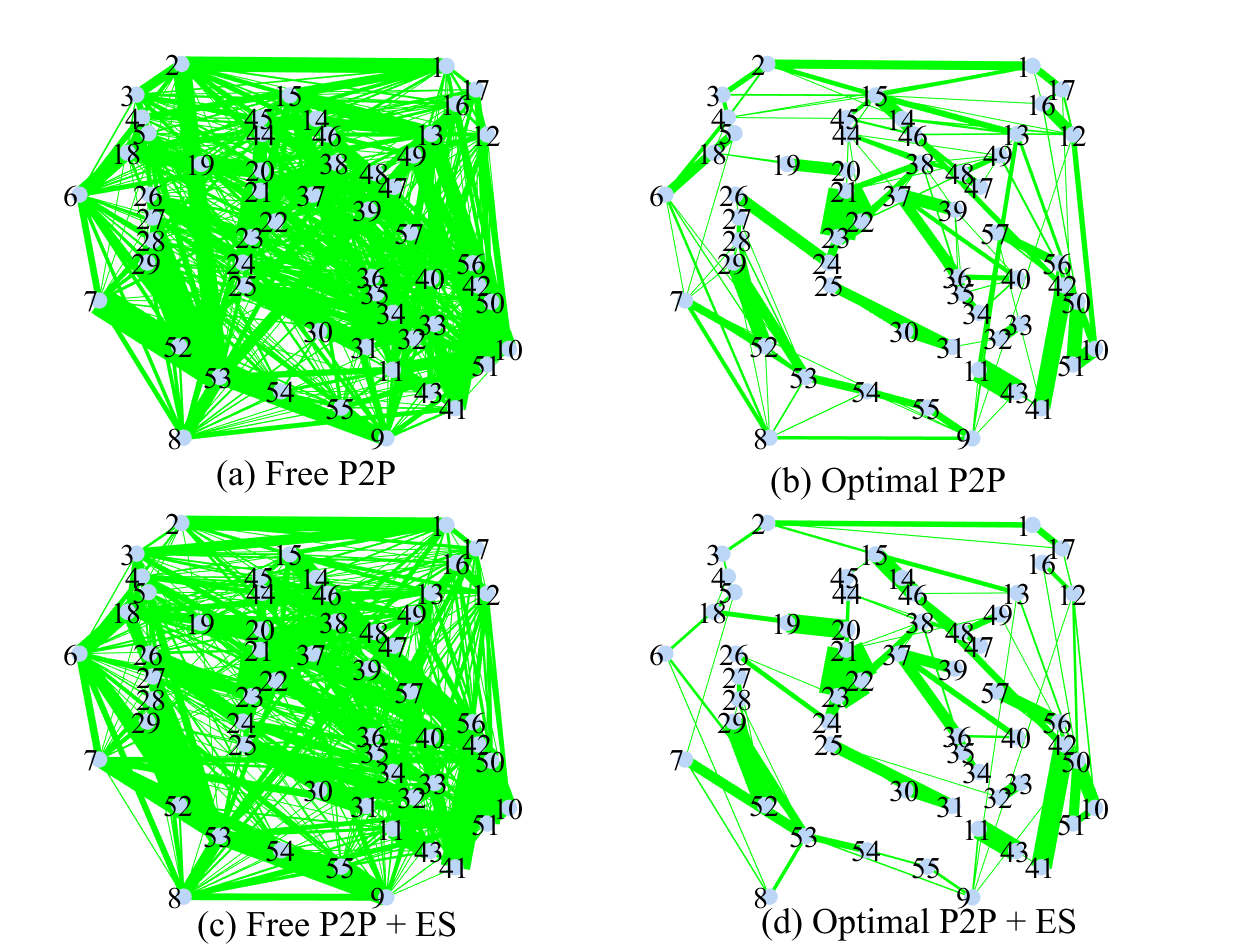}}
	\caption{Total P2P trades of 24 periods across the prosumers for IEEE 57-bus system (line thickness represents the amount of transaction).}
	\label{fig:57-bus-transaction}
\end{figure}
\begin{figure}[h]
	\centerline{\includegraphics[width=3.4 in]{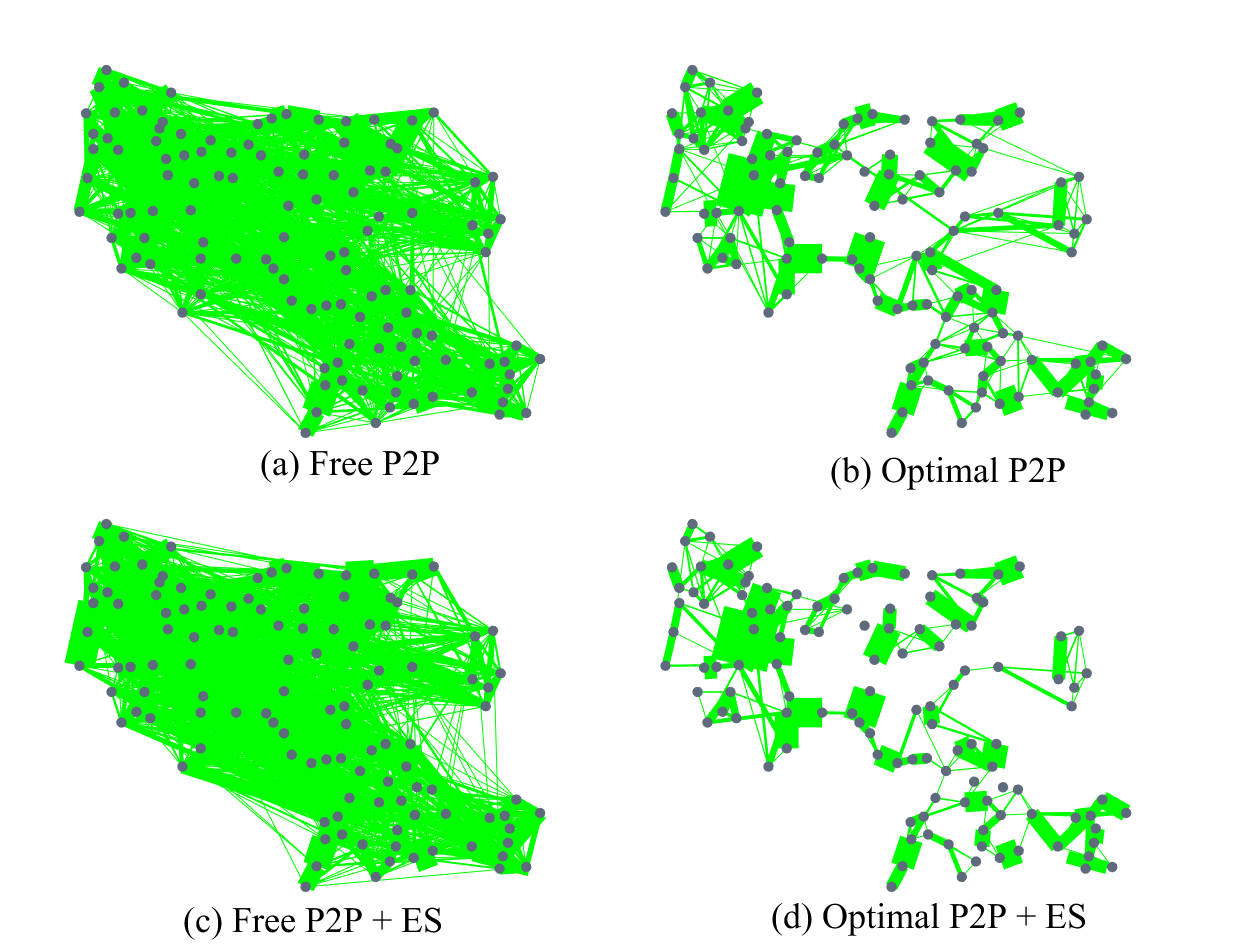}}
	\caption{Total P2P trades of 24 periods across the prosumers for IEEE 118-bus system (line thickness represents the amount of transaction).}
	\label{fig:118-bus-transaction}
\end{figure}

\section{Conclusion and Future Works}
This paper discussed the integration of the P2P  market scheme into the existing power systems from the perspective of network charge design. We used network charge as a means for the grid operator to attribute  grid-related cost (i.e., transmission loss) and ensure network constraints for empowering P2P transaction.
We characterized the interaction  between the power grid operator and the prosumers in a P2P market as a Stackelberg game. The grid operator first decides on the optimal network charge price to trade off the network charge revenue and the transmission loss considering the network constraints,  and then the prosumers optimize their energy management (i.e., energy consuming, storing and trading) for maximum economic benefit. We proved the Stackelberg game admits an \emph{equilibrium} network charge price. 
Besides, we proposed a solution method to obtain the \emph{equilibrium} network charge price  by converting the bi-level optimization problem into a single-level optimization problem. By simulating the IEEE bus systems, we demonstrated that the proposed network charge mechanism can benefit both the grid operator and the prosumers and achieve \emph{near-optimal} social welfare. In addition, we found  that the presence of ES on prosumer side will make the prosumers more sensitive to the network charge price increase. 

In this paper, we have studied the optimal network charge with deterministic supply and demand  and found that the network charge is effective in shaping the behaviors of prosumers in a P2P market.  Some future works  along this line include:  1) designing optimal network charge  price considering the uncertainties of prosumer supply and demand;  2) using the network charge as a tool to achieve demand response.

\bibliographystyle{ieeetr}
\bibliography{reference}

% that's all folks
\end{document}